\numberwithin{equation}{section}
\newcommand\di[1]{\:\textrm{d}#1}
\newcommand{\R}{\ensuremath{\mathbb{R}}}
\newcommand{\C}{\ensuremath{\mathbb{C}}}
\newcommand{\diag}{\operatorname{diag}}
\newcommand{\imag}{\operatorname{Im}}
\declaretheoremstyle[
spaceabove=\medskipamount, spacebelow=\medskipamount,
headfont=\bfseries,
notefont=\bfseries\boldmath, notebraces={(}{)},
bodyfont=\itshape,
postheadspace=.5em
]{cursive}
\declaretheorem[style=cursive,name=Theorem,numberwithin=section]{thm}
\declaretheorem[style=cursive,name=Lemma,numberlike=thm]{lem}
\declaretheorem[style=cursive,name=Corollary,numberlike=thm]{corollary}
\declaretheorem[style=cursive,name=Proposition,numberlike=thm]{prop}
\declaretheoremstyle[
spaceabove=\medskipamount, spacebelow=\medskipamount,
headfont=\bfseries,
notefont=\bfseries\boldmath, notebraces={(}{)},
bodyfont=\rmfamily,
postheadspace=.5em
]{upright}
\title{On the existence of impurity bound excitons in one-dimensional systems with zero range interactions} 
\author[1,2]{Jonas Have}
\author[3]{Hynek Kova\v{r}\'{\i}k}
\author[2]{Thomas G. Pedersen}
\author[1]{Horia D. Cornean}
\affil[1]{Department of Mathematical Sciences, Aalborg University}
\affil[2]{Department of Physics and Nanotechnology, Aalborg University}
\affil[3]{DICATAM, Sezione di Matematica,
Universit{\`a}  degli studi di Brescia}
\date{}
\begin{document}
\newlength\fheight
\newlength\fwidth
\maketitle 
\begin{abstract}
We consider a three-body one-dimensional Schr\"odinger operator with zero range potentials, which models a positive impurity with charge $\kappa > 0$ interacting with an exciton. We study the existence of discrete eigenvalues as $\kappa$ is varied. On one hand, we show that for sufficiently small $\kappa$ there exists a unique bound state whose binding energy behaves like $\kappa^4$, and we explicitly compute its leading coefficient. On the other hand, if $\kappa$ is larger than some critical value then the system has no bound states.
\end{abstract}


\section{Introduction}\label{sec:intro}
In this paper we consider a system of three one-dimensional non-relativistic quantum particles with zero range interactions. The system models an impurity interacting with an exciton, which is a pair made of an electron and a hole in either a semiconductor or an insulator. We want to give a rigorous description of the existence of bound states in the cases where the impurity has either a small or a large charge.  In the small charge case we prove the existence of a non-degenerate groundstate, we explicitly compute its leading order behavior and compare it to numerical calculations. In the case of a large impurity charge we prove the existence of a critical charge above which the discrete spectrum is absent, and we compute it numerically. The proofs of our main results are based on a combined application of the Feshbach inversion formula and the Birman-Schwinger principle.

The bound states of a helium like system with two negatively charged particles and a positively charged nucleus interaction through zero range potentials were previously examined in \cite{rosenthal1971solution} and in \cite{cornean2006critical}, while the bound states of a system with a negatively charged particle and two positively charged particles with infinite mass were examined in \cite{hogreve2009delta}. Also, the spectral properties of the similar, but more realistic, three-body Coulomb systems in three dimensions have been examined in \cite{martin1995stability,martin2004stability,gridnev2005stability}.

The choice of Coulomb interaction potential in one-dimensional systems is a non-trivial one. The Schrödinger operator for the one-dimensional hydrogen atom with the $1/|x|$ Coulomb potential is not essentially self-adjoint but has an infinite number of self-adjoint extensions, and the choice of extension and corresponding spectral properties are still the subject of active research\cite{de2012mathematical,de2009self,hogreve2014one}. Other options are to modify the Coulomb potential to get rid of the singularity\cite{fischer1995functional} or use zero range interactions, as used in the present paper. One-dimensional systems and zero range interactions might seem unphysical, but in many cases they can be used as toy models in order to avoid complicated numerical computations. In fact some three-dimensional Coulomb systems and one-dimensional systems with zero range interactions share important spectral properties. A classical example is the analogy between the one-dimensional hydrogen atom and the true three-dimensional hydrogen atom as described in  \cite{frost1956delta}.

Also, such simplified models naturally emerge as effective models for higher-dimensional systems submitted to various forms of confinement, like for example the one-dimensional effective models for excitons in carbon nanotubes in \cite{cornean2004one,ronnow2009stability} , one-dimensional models of optical response in one-dimensional semiconductors in \cite{pedersen2015analytical}, and the effective model for atoms in strong magnetic fields in \cite{brummelhuis2002effective,beau2010h2}. In a similar fashion, the system we consider in this paper can be interpreted as a model for impurity bound excitons in a one-dimensional semiconductor using the Wannier model. Excitonic effects are known to have a significant impact on the optical properties of semiconductors\cite{albrecht1998ab}, especially in one- and two-dimensional semiconductors where the reduced screening leads to large exciton binding energies compared to the bandgap. For a thorough introduction to systems with zero range potentials we refer to the book in \cite{albeverio2012solvable}.

The paper is structured as follows. In Section \ref{sec:main-results} we present the model and comment on the main results of the paper. In Section \ref{sec:framework} we specify the framework and introduce some important notation. In Section \ref{sec:small_kappa} we prove our first main result, namely that there exists a single discrete eigenvalue for sufficiently small impurity charge. In Sections \ref{sec:large_kappa} and \ref{sec:proof_of_corollary_col:kappa_c} we prove our second main result about the disappearance of the discrete spectrum if $\kappa$ becomes supercritical. 

\section{The Model and The Main Results}\label{sec:main-results}
Consider the system of two equal but oppositely charged particles with charge $\pm 1$ and mass $m$, and an impurity with charge  $\kappa$ and mass $M$. Let $\sigma = m/(m+M)$ denote the mass fraction, $0 \leq \sigma < 1$. Using relative atomic coordinates and removing its center of the mass, the system is formally described by the Schr\"odinger operator
\begin{equation}\label{eq:H_kappa}
	H_{\kappa,\sigma} = -\frac{1}{2}\Delta-\sigma\partial_x\partial_y-\delta(x-y)+\kappa\delta(x)-\kappa \delta(y),
\end{equation}
on $L^2(\R^2)$, where $\Delta$ is the two-dimensional Laplace operator, $\delta$ is the Dirac delta distribution.

The discrete spectrum of $H_{\kappa,\sigma}$ corresponds to impurity localized excitons. In the following we state our results regarding the discrete spectrum of $H_{\kappa,\sigma}$ and prove them in Secs. \ref{sec:small_kappa}, \ref{sec:large_kappa}, and \ref{sec:proof_of_corollary_col:kappa_c}. The situation is sketched in Figure \ref{fig:figure_1a} where we see the ground state energy and the essential spectrum for $\sigma=0$. The essential spectrum of $H_{\kappa,0}$ will be derived in Section \ref{sec:framework}, but as illustrated by the shaded area in the figure, its bottom stays equal to $-1/4$ on the closed interval $[0,1/\sqrt{2}]$, while for larger $\kappa$ it equals $-\kappa^2/2$.

The first result concerns the existence and behaviour of a discrete eigenvalue of $H_{\kappa,0}=: H_\kappa$ when $\kappa \in (0,1/\sqrt{2}]$.
\begin{thm}\label{thm:main_1}
If $\kappa>0$ is sufficiently small, the operator $H_\kappa$ has precisely one discrete eigenvalue and its leading order behaviour is:
\begin{equation}
	E(\kappa)= -\frac{1}{4}-16\left(\frac{4}{\pi}-1\right)^2\kappa^4 +\mathcal{O}(\kappa^5).
\end{equation}
Furthermore, the energy $E(\kappa)$ is non-degenerate and decreasing if $\kappa \in (0,1/\sqrt{2}]$, hence the operator $H_\kappa$ has at least one discrete eigenvalue on this interval.
\end{thm}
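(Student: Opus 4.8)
The plan is to reduce $H_\kappa$ to an effective one–dimensional Schr\"odinger operator by applying the Feshbach map relative to the ground–state channel of the exciton, and then to analyse that operator with the Birman–Schwinger principle for weakly coupled $1$D operators. Pass to relative and center–of–mass coordinates $u=x-y$, $w=(x+y)/2$, in which (for $\sigma=0$) $-\tfrac12\Delta-\delta(x-y)=h_u-\tfrac14\partial_w^2$ with $h_u=-\partial_u^2-\delta(u)$; the operator $h_u$ has the single eigenvalue $-\tfrac14$ with normalized eigenfunction $\phi(u)=\tfrac1{\sqrt2}e^{-|u|/2}$ and absolutely continuous spectrum $[0,\infty)$. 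Put $P=|\phi\rangle\langle\phi|\otimes I$, $Q=1-P$, and $W=\delta(x)-\delta(y)$, so that $H_\kappa=(h_u-\tfrac14\partial_w^2)+\kappa W$. A direct computation using $x=w+u/2$, $y=w-u/2$ and the evenness of $\phi$ gives $P\delta(x)P=P\delta(y)P=e^{-2|w|}$ as multiplication operators on $L^2(\R_w)$, hence the crucial identity $PWP=0$: the in–channel first–order shift vanishes, which is the structural reason for the $\kappa^4$ behaviour. Since $Q(h_u-\tfrac14\partial_w^2)Q\ge 0$ and, as established in Section~\ref{sec:framework}, $\inf\sigma(QH_\kappa Q)$ stays strictly above $-\tfrac14$ for $\kappa\in[0,1/\sqrt2]$, for such $\kappa$ and $E\le-\tfrac14$ the Feshbach formula applies: $E$ is a discrete eigenvalue of $H_\kappa$ iff $0$ is a discrete eigenvalue of
\[
	F_P(E)=-\tfrac14-E-\tfrac14\partial_w^2-\kappa^2\,R_\kappa(E),\qquad R_\kappa(E):=PWQ\,(QH_\kappa Q-E)^{-1}QWP\ \ge 0 ,
\]
equivalently iff $E+\tfrac14$ is an eigenvalue of the effective operator $K_\kappa(E):=-\tfrac14\partial_w^2-\kappa^2R_\kappa(E)$ on $L^2(\R_w)$ — a $1$D Schr\"odinger operator with the purely attractive, nonlocal but exponentially decaying perturbation $-\kappa^2R_\kappa(E)$, whose mass $\langle\mathbbm{1},R_\kappa(E)\mathbbm{1}\rangle$ is strictly positive because $WP\ne0$.

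For small $\kappa$ one has $\sigma(QH_\kappa Q)\cap(-\infty,-\tfrac14]=\varnothing$, so every discrete eigenvalue of $H_\kappa$ below $-\tfrac14$ is seen by the Feshbach map. By the Birman–Schwinger principle for a weakly coupled $1$D operator, $K_\kappa(E)$ has exactly one negative eigenvalue $\mu_\kappa(E)$ with $\sqrt{-\mu_\kappa(E)}=\kappa^2\langle\mathbbm{1},R_\kappa(E)\mathbbm{1}\rangle+o(\kappa^2)=\mathcal O(\kappa^2)$, uniformly for $E$ in a left neighbourhood of $-\tfrac14$. Solving the fixed–point equation $E+\tfrac14=\mu_\kappa(E)$ — whose right–hand side is $\mathcal O(\kappa^4)$ with $E$–derivative $\mathcal O(\kappa^4)$, hence a contraction — yields a unique $E(\kappa)=-\tfrac14+\mathcal O(\kappa^4)$, so $H_\kappa$ has precisely one discrete eigenvalue. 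Feeding this back and using $R_\kappa(-\tfrac14)=R_0(-\tfrac14)+\mathcal O(\kappa)$ (the $\mathcal O(\kappa)$ stemming from the term $\kappa QWQ$ in $QH_\kappa Q$, which is what degrades the remainder to $\mathcal O(\kappa^5)$ rather than something smaller) gives $E(\kappa)=-\tfrac14-C^2\kappa^4+\mathcal O(\kappa^5)$ with $C=\langle\mathbbm{1},R_0(-\tfrac14)\mathbbm{1}\rangle$. To evaluate $C$, note that $PWP=0$ forces $QWP\,\mathbbm{1}=WP\,\mathbbm{1}=\phi(u)\bigl(\delta(w+u/2)-\delta(w-u/2)\bigr)$; Fourier transforming in $w$ and inserting the explicit resolvent of $h_u$ (the $1$D $\delta$–interaction resolvent minus the rank–one piece $|\phi\rangle\langle\phi|/(E+\tfrac14)$) reduces $C$ to a one–dimensional integral of an elementary function in the conjugate variable $\xi$, which evaluates to $C=4\left(\tfrac4\pi-1\right)$, so $C^2=16\left(\tfrac4\pi-1\right)^2$. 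Non–degeneracy of $E(\kappa)$ follows from the Perron–Frobenius property of $e^{-tH_\kappa}$ (the heat semigroup composed with bounded–below $\delta$–perturbations is positivity improving) together with $E(\kappa)<-\tfrac14=\inf\sigma_{\mathrm{ess}}(H_\kappa)$.

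For arbitrary $\kappa\in(0,1/\sqrt2]$ the same Feshbach reduction is used non–perturbatively. Since $R_\kappa(E)\ge0$ is nonzero for all $E\le-\tfrac14$, the $1$D operator $K_\kappa(E)$ has a simple negative ground–state eigenvalue $\mu_\kappa(E)<0$ — any nontrivial attractive perturbation binds in one dimension. The function $f(E):=\mu_\kappa(E)-(E+\tfrac14)$ on $(-\infty,-\tfrac14]$ is continuous, satisfies $f(-\tfrac14)=\mu_\kappa(-\tfrac14)<0$, and $f(E)\to+\infty$ as $E\to-\infty$ (there $R_\kappa(E)\to0$, $\mu_\kappa(E)\to0^-$), so it has a zero $E^\ast<-\tfrac14$; by the Feshbach characterization $E^\ast$ is a discrete eigenvalue of $H_\kappa$ lying below $\inf\sigma_{\mathrm{ess}}(H_\kappa)$. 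Hence $\inf\sigma(H_\kappa)$ is an eigenvalue, and it is simple by Perron–Frobenius. For monotonicity, $E(\kappa)=\inf\sigma(H_\kappa)$ is a simple isolated eigenvalue of an analytic family of type (B), so Feynman–Hellmann gives $E'(\kappa)=\langle\psi_\kappa,W\psi_\kappa\rangle=\langle\psi_\kappa,\delta(x)\psi_\kappa\rangle-\langle\psi_\kappa,\delta(y)\psi_\kappa\rangle$; one then shows this is strictly negative — the positive ground state is more concentrated on the line $y=0$, where the impurity attracts, than on $x=0$, where it repels — which, together with $E'(\kappa)\sim-4C^2\kappa^3<0$ for small $\kappa$ and continuity, forces $E'(\kappa)<0$ on all of $(0,1/\sqrt2]$.

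The computational heart, and first main obstacle, is the explicit evaluation $C=4(4/\pi-1)$: it requires the exact off–channel resolvent of the $1$D $\delta$–interaction and careful bookkeeping of the Fourier transform in $w$ of the singular vector $WP\,\mathbbm{1}$, the factor $4/\pi$ emerging from an $\arctan$–type integral in $\xi$ evaluated at infinity. The second delicate point is the strict inequality $\langle\psi_\kappa,\delta(x)\psi_\kappa\rangle<\langle\psi_\kappa,\delta(y)\psi_\kappa\rangle$ underpinning monotonicity, which — since it must hold on the whole interval, not just for small $\kappa$ — has to be obtained from qualitative positivity properties of the ground state rather than from an expansion; and throughout one relies on the rigorous self–adjoint realisation of $H_\kappa$ and the control of $\sigma(QH_\kappa Q)$ provided by Section~\ref{sec:framework}.
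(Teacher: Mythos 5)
Your overall architecture (Feshbach reduction to an effective operator on $L^2(\R_w)$, weak-coupling Birman--Schwinger, threshold rank-one singularity, Perron--Frobenius for non-degeneracy) is a legitimate variant of the paper's proof, which instead works on the ``skeleton'' space $[L^2(\R)]^3$ via the Krein formula \eqref{eq:kreins}. Your observation $PWP=0$ is the same cancellation that makes the first term of \eqref{eq:neumann_series} drop out of \eqref{eq:S_W1}, and for small $\kappa$ your scheme can be made rigorous, provided you supply what you omit: the exponential-decay / Hilbert--Schmidt bounds on the kernel of $R_\kappa(E)$ that justify the ``rank-one singular part plus uniformly bounded remainder'' decomposition (the analogue of Lemma \ref{lem:bounded_b1} and the estimate \eqref{hc-103}), and the actual evaluation of $C$, which is the quantitative heart and is only asserted.

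There are, however, two genuine gaps in the non-perturbative part. First, your existence argument on all of $(0,1/\sqrt{2}]$ rests on the claim that $\inf\sigma(QH_\kappa Q)>-\tfrac14$ uniformly on this interval; this is \emph{not} established in Section \ref{sec:framework} (which only gives $\sigma_{ess}(H_{\kappa,\sigma})$ via HVZ), and it is doubtful near the endpoint: the impurity--electron channel built from the bound state of $-\tfrac12\Delta-\kappa\delta(y)$ has threshold $-\kappa^2/2$, is not contained in $\operatorname{Ran}P$, and reaches $-\tfrac14$ exactly at $\kappa=1/\sqrt2$, so $\inf\sigma(QH_\kappa Q)$ cannot stay uniformly above $-\tfrac14$ there. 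Without this, $R_\kappa(E)$ need not be defined or positive for all $E\le-\tfrac14$, and your intermediate-value argument for $f(E)=\mu_\kappa(E)-(E+\tfrac14)$ collapses. Second, your monotonicity proof reduces to the unproven strict inequality $\int_\R|\psi_\kappa(t,0)|^2\,dt>\int_\R|\psi_\kappa(0,t)|^2\,dt$, for which ``the ground state is more concentrated where the impurity attracts'' is not an argument. The paper avoids both difficulties with a single symmetry trick: the exchange $U\!:f(x,y)\mapsto f(y,x)$ satisfies $UH_\kappa U^{-1}=H_{-\kappa}$, hence $E(\kappa)=E(-\kappa)$ and $\Psi_\kappa(x,y)=e^{i\phi(\kappa)}\Psi_{-\kappa}(y,x)$; feeding the swapped eigenvector into the min--max principle and comparing with the Feynman--Hellmann identity \eqref{hc-3} yields $E'(\kappa)\le -E'(\kappa)$, i.e.\ $E'(\kappa)\le 0$, and then real-analytic continuation in $\kappa$ from the small-$\kappa$ regime carries the eigenvalue all the way to $1/\sqrt2$ without ever needing spectral control of $QH_\kappa Q$. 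You need either this symmetry argument or a genuine proof of your two claims to close the proof on the full interval.
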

The behaviour $\kappa^4$ of the leading order of $E(\kappa)$ (for $\kappa$ sufficiently small) equals the weak coupling asymptotic of the ground state energy of one-dimensional Schrödinger operators with zero-mean potentials as was shown in \cite{simon1976bound}. Also, the binding requirement (that $\kappa$ should be sufficiently small) is similar to one of the two binding requirements that were found in \cite{martin2004stability} for the three-dimensional Coulomb system.

In Figure \ref{fig:figure_1b} the leading behavior of the discrete eigenvalue given in Theorem \ref{thm:main_1} is compared to a numerical calculation of the smallest discrete eigenvalue of $H_\kappa$. The numerical calculations are done using a similar method to what was presented in \cite{rosenthal1971solution}. The figure shows that they agree well for $\kappa$ below $0.25$.

\begin{figure}[!ht]
\centering
\subfloat[ ]{
\includegraphics{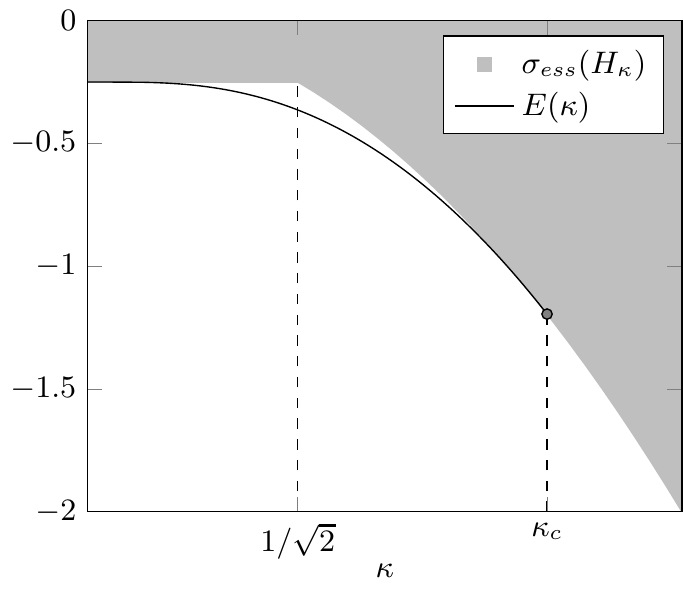}
\label{fig:figure_1a}}
\subfloat[ ]{
\includegraphics{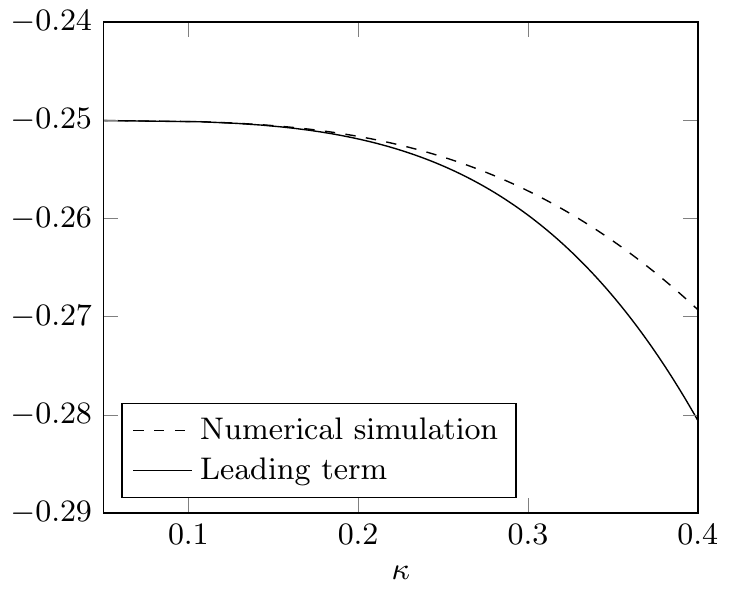}
\label{fig:figure_1b}}
\caption{In (a) a plot of the ground state energy is given as a function of the impurity charge $\kappa$. Figure (b) is a comparison of the leading term of the discrete eigenvalue, and the numerically calculated discrete eigenvalue.}
\end{figure}

The results can be generalized to hold for $0<\sigma<1$ as well. If $\kappa$ is sufficiently small the operator $H_{\kappa,\sigma}$ has a single discrete eigenvalue, and the leading behavior of this discrete eigenvalue $E$ is calculated to be
\begin{equation*}
	E(\kappa)= -\frac{1}{4(1-\sigma)} - \beta(\sigma)\kappa^4 +\mathcal{O}(\kappa^5),
\end{equation*}
where
\begin{equation}\label{eq:beta}
	\beta(\sigma) := 4\frac{\left[6 \sigma \sqrt{1 - \sigma^2} - (2 - \sigma) \sigma \pi - 8 \sigma^2 \cos^{-1}\left(\frac{\sqrt{1+\sigma}}{\sqrt{2}}\right) + \tan^{-1}\left(\frac{2\sigma(1-\sigma^2)}{1-2\sigma^2}\right)\right]^2}{(1+\sigma)(1-\sigma)^2\pi^2\sigma^2}
\end{equation}
when $0<\sigma<1/\sqrt{2}$. The solution can be extended to the range $1/\sqrt{2}\leq\sigma<1$ by choosing another branch of $\tan^{-1}$.

For $\kappa \geq 1/\sqrt{2}$ we have the following results.
\begin{thm}\label{thm:critical_kappa}
Let $H_{\kappa,\tilde\kappa}$ be the self-adjoint operator formally described by
\begin{equation}
H_{\kappa,\tilde\kappa} = -\frac{1}{2}\Delta -\delta(x-y) + \tilde\kappa\delta(x) - \kappa\delta(y).
\end{equation}
on $L^2(\R^2)$.
Given any $\tilde\kappa > 1$ there exists $\kappa_M$ such that $H_{\kappa,\tilde\kappa}$ has no discrete eigenvalues for all $\kappa \geq \kappa_M$. Furthermore, given any $0 <\tilde\kappa < 1$ there exists some $\kappa_M$ such that $H_{\kappa,\tilde\kappa}$ has at least one discrete eigenvalue for all $\kappa\geq \kappa_M$.
\end{thm}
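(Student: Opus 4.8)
The plan is a Feshbach (Schur--complement) reduction to an effective one--dimensional operator, followed by the Birman--Schwinger principle. Recall from the essential--spectrum analysis of Section~\ref{sec:framework} that for $\kappa>1/\sqrt2$ the two--body thresholds of $H_{\kappa,\tilde\kappa}$ are $-1/4$ (the attractive $\delta(x-y)$ channel), $0$ (the repulsive $\tilde\kappa\delta(x)$ channel) and $-\kappa^2/2$ (the attractive $-\kappa\delta(y)$ channel), so $\inf\sigma_{\mathrm{ess}}(H_{\kappa,\tilde\kappa})=-\kappa^2/2$. Let $A:=-\tfrac12\partial_y^2-\kappa\delta(y)$ on $L^2(\R_y)$, with normalised ground state $\phi_\kappa(y)=\sqrt{\kappa}\,e^{-\kappa|y|}$, eigenvalue $-\kappa^2/2$, and $\sigma(A)=\{-\kappa^2/2\}\cup[0,\infty)$; put $P:=\mathbbm 1_x\otimes|\phi_\kappa\rangle\langle\phi_\kappa|$ and $Q:=\mathbbm 1-P$. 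The key point is that $\langle\phi_\kappa,\delta(x-y)\phi_\kappa\rangle_y=|\phi_\kappa(x)|^2=\kappa e^{-2\kappa|x|}$, which is an approximate identity converging weakly to $\delta$ as $\kappa\to\infty$. Hence the ``$P$--block'' of $H_{\kappa,\tilde\kappa}+\tfrac{\kappa^2}{2}$ is exactly the one--dimensional operator $-\tfrac12\partial_x^2+\tilde\kappa\delta(x)-\kappa e^{-2\kappa|x|}$ on $L^2(\R_x)$, which converges as $\kappa\to\infty$ to $-\tfrac12\partial_x^2+(\tilde\kappa-1)\delta(x)$. The whole dichotomy is dictated by the sign of $\tilde\kappa-1$.

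\textbf{Existence, $0<\tilde\kappa<1$.} Here a variational upper bound suffices. Fix a normalised $\chi\in H^1(\R)$ with $\langle\chi,(-\tfrac12\partial_x^2+(\tilde\kappa-1)\delta(x))\chi\rangle<0$ (possible since $\tilde\kappa-1<0$). Testing on $\Psi:=\chi\otimes\phi_\kappa\in H^1(\R^2)$ and using $\|\phi_\kappa\|=1$ and $A\phi_\kappa=-\tfrac{\kappa^2}{2}\phi_\kappa$ gives
\[
\langle\Psi,H_{\kappa,\tilde\kappa}\Psi\rangle=\big\langle\chi,(-\tfrac12\partial_x^2+\tilde\kappa\delta(x)-\kappa e^{-2\kappa|x|})\chi\big\rangle-\tfrac{\kappa^2}{2}.
\]
Since $\int\kappa e^{-2\kappa|x|}|\chi(x)|^2\di x\to|\chi(0)|^2$, the first term converges to a negative number, so $\langle\Psi,H_{\kappa,\tilde\kappa}\Psi\rangle<-\tfrac{\kappa^2}{2}=\inf\sigma_{\mathrm{ess}}(H_{\kappa,\tilde\kappa})$ for $\kappa$ large; the spectrum then extends below the essential spectrum and $H_{\kappa,\tilde\kappa}$ has a discrete eigenvalue.

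\textbf{Absence, $\tilde\kappa>1$.} The goal is $H_{\kappa,\tilde\kappa}\ge-\tfrac{\kappa^2}{2}$ for all large $\kappa$, which forbids discrete eigenvalues. Write $\Psi=P\Psi+Q\Psi$ with $P\Psi=f(x)\phi_\kappa(y)$. Using that $-\tfrac12\partial_x^2$ commutes with $P$, that $A+\tfrac{\kappa^2}{2}$ vanishes on $\operatorname{Ran}P$ and is $\ge\tfrac{\kappa^2}{2}$ on $\operatorname{Ran}Q$, that $\tilde\kappa\delta(x)$ (acting only in $x$) commutes with $P$ and hence produces no $P$--$Q$ cross term while $\int|\Psi(0,y)|^2\di y\ge|f(0)|^2$, and splitting $\Psi|_{x=y}=f\phi_\kappa+(Q\Psi)|_{x=y}$ with $|a+b|^2\le(1{+}\varepsilon)|a|^2+(1{+}\varepsilon^{-1})|b|^2$, one obtains for every $\varepsilon>0$
\[
\langle\Psi,(H_{\kappa,\tilde\kappa}+\tfrac{\kappa^2}{2})\Psi\rangle\ \ge\ \big\langle f,(-\tfrac12\partial_x^2+\tilde\kappa\delta(x)-(1{+}\varepsilon)\kappa e^{-2\kappa|x|})f\big\rangle+\big\langle Q\Psi,(-\tfrac12\partial_x^2+A+\tfrac{\kappa^2}{2})Q\Psi\big\rangle-(1{+}\varepsilon^{-1})\|(Q\Psi)|_{x=y}\|^2.
\]
The $f$--term is $\ge0$ by Birman--Schwinger: the Birman--Schwinger operator of the attractive perturbation $(1{+}\varepsilon)\kappa e^{-2\kappa|x|}$ of $-\tfrac12\partial_x^2+\tilde\kappa\delta(x)$ is positive and trace class, with trace $(1{+}\varepsilon)\int\kappa e^{-2\kappa|x|}(2|x|+\tilde\kappa^{-1})\di x=(1{+}\varepsilon)(\kappa^{-1}+\tilde\kappa^{-1})$, where $2|x|+\tilde\kappa^{-1}$ is the zero--energy diagonal of $(-\tfrac12\partial_x^2+\tilde\kappa\delta(x))^{-1}$; since $\tilde\kappa>1$ one picks $\varepsilon<\tilde\kappa-1$, making this trace $<1$ for $\kappa$ large. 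The remaining two terms are made $\ge0$ for $\kappa$ large via the trace inequality $\|(Q\Psi)|_{x=y}\|^2\le a\|Q\Psi\|^2+a^{-1}\|\nabla Q\Psi\|^2$ together with $A+\tfrac{\kappa^2}{2}\ge\tfrac{\kappa^2}{2}$ on $\operatorname{Ran}Q$ (and a further trace bound on the $-\kappa\delta(y)$ term recovering a fraction of $\|\partial_y Q\Psi\|^2$), choosing $a$ fixed and $\kappa$ large. Hence $H_{\kappa,\tilde\kappa}+\tfrac{\kappa^2}{2}\ge0$.

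\textbf{Expected main obstacle.} The delicate part is controlling the last two ($Q$) terms in the absence case: one must show that the $\kappa^2/2$ excitation gap of the deep channel really dominates the form norm of the attractive $\delta(x-y)$ interaction restricted to $\operatorname{Ran}Q$, uniformly in $\Psi$, which forces a careful interplay between the $x$-- and $y$--kinetic energies on $\operatorname{Ran}Q$. A related subtlety is that the limiting effective operator $-\tfrac12\partial_x^2+(\tilde\kappa-1)\delta(x)$ has spectrum reaching down to $0$, so no loss can be tolerated in the $f$--term; this is precisely why the strict inequality $\tilde\kappa>1$ is used to manufacture the slack $\varepsilon$, and why the Birman--Schwinger trace --- the constant $\tilde\kappa^{-1}$ against $\int\kappa e^{-2\kappa|x|}\di x=1$ --- must be computed exactly rather than only asymptotically.
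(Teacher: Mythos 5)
Your route is genuinely different from the paper's. The paper stays inside its skeleton/Krein framework: it rescales by $U_\kappa$, performs two Feshbach reductions of the $3\times3$ operator pencil $G_{\kappa,\tilde\kappa}(-\varepsilon)$ down to a scalar operator, applies the Birman--Schwinger principle with the threshold expansion of $(\mathbbm{1}-T_0)^{-1}$ at $\varepsilon=1$, and reads off both the absence ($\tilde\kappa>1$) and the existence ($\tilde\kappa<1$) statements from the sign of $\lim_{\kappa\to\infty}f(0,\kappa)=\tilde\kappa-1$. You instead project in physical space onto the deep channel $\phi_\kappa(y)=\sqrt\kappa\,e^{-\kappa|y|}$ and reduce to the effective one-dimensional potential $\tilde\kappa\delta(x)-\kappa e^{-2\kappa|x|}\to(\tilde\kappa-1)\delta(x)$; the same sign dichotomy appears, but more transparently. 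Your existence part ($0<\tilde\kappa<1$) is complete and simpler than the paper's: the trial state $\chi\otimes\phi_\kappa$ pushes the form value below $-\kappa^2/2=\inf\sigma_{\mathrm{ess}}$ for large $\kappa$. Your $P$-channel Birman--Schwinger computation in the absence part is also correct: the zero-energy diagonal of $(-\tfrac12\partial_x^2+\tilde\kappa\delta)^{-1}$ is indeed $2|x|+\tilde\kappa^{-1}$, the trace is $(1+\varepsilon)(\kappa^{-1}+\tilde\kappa^{-1})$, and the slack $\varepsilon<\tilde\kappa-1$ is exactly what $\tilde\kappa>1$ buys.

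The gap is in the $Q$-sector of the absence proof, precisely where you flagged the obstacle. As described --- full-gradient trace inequality for $\|(Q\Psi)|_{x=y}\|^2$ plus ``a further trace bound on the $-\kappa\delta(y)$ term recovering a fraction of $\|\partial_y Q\Psi\|^2$'' --- the step fails. The sharp one-dimensional bound is $\kappa|v(0)|^2\le\kappa\|v\|\,\|v'\|\le\tfrac t2\|v'\|^2+\tfrac{\kappa^2}{2t}\|v\|^2$, saturated by $\phi_\kappa$ itself; retaining any fraction of $\|\partial_y Q\Psi\|^2$ forces $t<1$ and hence an $L^2$ penalty $\tfrac{\kappa^2}{2t}>\tfrac{\kappa^2}{2}$, i.e.\ a deficit of order $\kappa^2\|Q\Psi\|^2$ that nothing in your inequality absorbs (your fixed-$a$ term only produces an $O(1)\|Q\Psi\|^2$ loss). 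To close this you must either exploit the orthogonality to $\phi_\kappa$ a second time (a spectral-gap argument: for $v\perp\phi_\kappa$ one has $-\tfrac{1-\theta}2\partial_y^2-\kappa\delta(y)\ge-C\theta^2\kappa^2$ with $C$ independent of $\kappa$, since the ground state of the $\theta$-modified operator is nearly parallel to $\phi_\kappa$), or --- simpler and entirely within your scheme --- avoid $\|\partial_yQ\Psi\|^2$ altogether by estimating the diagonal trace with the $x$-derivative only: $\int_\R|u(x,x)|^2\di x\le\|u\|\,\|\partial_xu\|\le\tfrac1{2a}\|\partial_xu\|^2+\tfrac a2\|u\|^2$, and then use just $A\ge0$ on $\operatorname{Ran}Q$, so that the $Q$-sector is bounded below by $\bigl(\tfrac12-\tfrac{1+\varepsilon^{-1}}{2a}\bigr)\|\partial_xu\|^2+\bigl(\tfrac{\kappa^2}2-\tfrac{a(1+\varepsilon^{-1})}2\bigr)\|u\|^2\ge0$ for a fixed large $a$ and $\kappa$ large. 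With that repair your argument gives a correct, self-contained alternative proof of the theorem.
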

As a consequence of the previous two theorems we will also prove the following corollary:
\begin{corollary}\label{col:kappa_c}
Let $H_\kappa$ be the operator in \eqref{eq:H_kappa}. Then there exists a critical charge of the impurity, which we will denote $\kappa_c$, such that the discrete spectrum of $H_{\kappa}$ is non-empty for all $0<\kappa < \kappa_c$ and empty for $\kappa \geq \kappa_c$.
\end{corollary}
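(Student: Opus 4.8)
The plan is to combine the two theorems already proved with two soft facts — openness and connectedness of the binding set — the latter being the only genuinely new ingredient. Write $\Sigma(\kappa):=\inf\sigma_{\mathrm{ess}}(H_\kappa)$, which by the computation in Section~\ref{sec:framework} equals $\min\{-1/4,-\kappa^2/2\}$, and set
\[
S:=\{\kappa>0:\ \sigma_{\mathrm{disc}}(H_\kappa)\neq\emptyset\}=\{\kappa>0:\ \inf\sigma(H_\kappa)<\Sigma(\kappa)\}.
\]
Corollary~\ref{col:kappa_c} is precisely the statement that $S=(0,\kappa_c)$ for some $\kappa_c\in(0,\infty)$. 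Theorem~\ref{thm:main_1} immediately gives $(0,1/\sqrt2]\subseteq S$, so $S$ is nonempty and contains an interval through the origin.

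Next I would show $S$ is bounded. Writing $H_\kappa$ as $H_{\kappa,\tilde\kappa}$ at $\tilde\kappa=\kappa$ in the notation of Theorem~\ref{thm:critical_kappa}, fix once and for all some $\tilde\kappa_0>1$. For $\kappa\ge\tilde\kappa_0$ one has $H_\kappa-H_{\kappa,\tilde\kappa_0}=(\kappa-\tilde\kappa_0)\,\delta(x)\ge0$ in the form sense, hence $\inf\sigma(H_\kappa)\ge\inf\sigma(H_{\kappa,\tilde\kappa_0})$. Moreover the $\delta(x)$–term is repulsive and so contributes no two–body bound channel, whence an HVZ–type argument shows that $\inf\sigma_{\mathrm{ess}}(H_{\kappa,\tilde\kappa})$ is independent of $\tilde\kappa$ and equals $\Sigma(\kappa)$. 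Applying Theorem~\ref{thm:critical_kappa} with the fixed value $\tilde\kappa_0$ produces $\kappa_M$ with $\inf\sigma(H_{\kappa,\tilde\kappa_0})=\Sigma(\kappa)$ for all $\kappa\ge\kappa_M$; combining the two displays gives $\inf\sigma(H_\kappa)\ge\Sigma(\kappa)$, i.e. $\sigma_{\mathrm{disc}}(H_\kappa)=\emptyset$, for every $\kappa\ge\max(\tilde\kappa_0,\kappa_M)$. Therefore $\kappa_c:=\sup S$ is finite, and $\kappa_c\ge1/\sqrt2$ by the previous paragraph.

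I would then prove that $S$ is open. The resolvent of $H_\kappa$ is given by an explicit Krein–type formula whose coefficients depend continuously on $\kappa$, so $\kappa\mapsto H_\kappa$ is norm–resolvent continuous; together with the (explicit, continuous) dependence of $\Sigma(\kappa)$ on $\kappa$ and the stability of isolated eigenvalues of finite multiplicity, this shows that $\kappa_0\in S$ forces $\kappa\in S$ for all $\kappa$ near $\kappa_0$. Openness already disposes of the easy half of the corollary: if $\kappa>\kappa_c$ then $\kappa\notin S$ by definition of the supremum, and $\kappa_c\notin S$ as well, since an open subset of $\R$ cannot contain its supremum; hence $\sigma_{\mathrm{disc}}(H_\kappa)=\emptyset$ for all $\kappa\ge\kappa_c$.

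What remains, and what I expect to be the main obstacle, is to show $(0,\kappa_c)\subseteq S$, i.e. that $S$ is connected. Since $(0,1/\sqrt2]\subseteq S$, it is enough to rule out a ``second island'': some $\kappa\in(1/\sqrt2,\kappa_c)$ with $\sigma_{\mathrm{disc}}(H_\kappa)=\emptyset$. The plan is to redo the Feshbach–Birman–Schwinger reduction of the proof of Theorem~\ref{thm:critical_kappa}, but performed relative to the impurity–hole bound pair that governs $\Sigma(\kappa)=-\kappa^2/2$ on $(1/\sqrt2,\infty)$: this reduces the existence of an eigenvalue of $H_\kappa$ below $\Sigma(\kappa)$ to the requirement that a scalar Birman–Schwinger function $\mathcal B(E,\kappa)$ reach the value $1$ for some $E<\Sigma(\kappa)$, after which one shows that $E\mapsto\mathcal B(E,\kappa)$ is monotone on $(-\infty,\Sigma(\kappa))$ while, measuring $E$ relative to the threshold, $\kappa\mapsto\mathcal B$ undergoes no sign change in the relevant range, so that the set of admissible $\kappa$ is an interval. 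The difficulty is exactly that the $\kappa$–dependent part of the interaction, $\kappa(\delta(x)-\delta(y))$, is not sign–definite, so neither a crude quadratic–form comparison in $\kappa$ nor the usual Birman–Schwinger monotonicity applies to $H_\kappa$ directly; the monotonicity has to be extracted from the reduced single–channel operator, in which the residual coupling felt by the electron does have a definite character. Once connectedness of $S$ is established, $S=(0,\kappa_c)$ follows, and $\kappa_c$ can be located numerically as indicated in Figure~\ref{fig:figure_1a}.
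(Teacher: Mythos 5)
Your first three steps are sound and essentially reproduce the paper's Step 1 and the endpoint argument: Theorem~\ref{thm:main_1} gives $(0,1/\sqrt2]\subseteq S$, the comparison $H_\kappa=H_{\kappa,\kappa}\geq H_{\kappa,\tilde\kappa_0}$ for $\kappa\geq\tilde\kappa_0>1$ together with Theorem~\ref{thm:critical_kappa} bounds $S$, and openness of $S$ (the paper gets it from type-B analyticity rather than norm-resolvent continuity, but either works) shows that $\kappa_c=\sup S$ itself is non-binding. The problem is the step you yourself flag as the main obstacle: you never actually prove that $S$ is connected, i.e.\ that binding, once lost, cannot reappear at larger $\kappa$. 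What you offer there is a plan (a Birman--Schwinger reduction relative to the impurity--hole channel, followed by a monotonicity claim for the resulting scalar function ``with no sign change in $\kappa$'') whose key assertion is exactly the thing that needs proving, and you correctly observe that the standard tools fail because $\partial_\kappa H_\kappa=\delta(x)-\delta(y)$ is not sign-definite. As written, the proposal establishes $S\subseteq(0,\kappa_c)$ with equality only up to a possible ``gap'' inside $(1/\sqrt2,\kappa_c)$, which is not the statement of the corollary.

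The missing idea is a dilation that trades the indefinite $\kappa$-dependence for a definite one. With $(U_\kappa\Psi)(x,y)=\kappa\Psi(\kappa x,\kappa y)$ one has
\begin{equation*}
U_\kappa^{-1}H_\kappa U_\kappa=\kappa^2\widetilde H_\kappa,\qquad \widetilde H_\kappa=-\tfrac12\Delta-\delta(y)+\delta(x)-\tfrac1\kappa\,\delta(x-y),
\end{equation*}
so after rescaling the only $\kappa$-dependent term is the attractive electron--hole interaction $-\kappa^{-1}\delta(x-y)$, which is monotone increasing in $\kappa$ in the form sense; hence $\widetilde H_\kappa\geq\widetilde H_{\kappa_1}$ for $\kappa\geq\kappa_1$. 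Since for $\kappa\geq1/\sqrt2$ the bottom of $\sigma_{\mathrm{ess}}(\widetilde H_\kappa)$ is the $\kappa$-independent value $-1/2$ (coming from the $-\delta(y)$ channel), the min--max principle immediately upgrades ``$H_{\kappa_1}$ has no discrete spectrum'' to ``$H_\kappa$ has no discrete spectrum for all $\kappa\geq\kappa_1$'', which is the connectedness you need. Note that the multiplication by $\kappa^2$ and the $\kappa$-dependence of the threshold $\Sigma(\kappa)=-\kappa^2/2$ are both absorbed by the unitary rescaling, which is precisely why the comparison must be made for $\widetilde H_\kappa$ and not for $H_\kappa$ directly. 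If you insert this scaling argument in place of your sketched Birman--Schwinger monotonicity, the rest of your proposal goes through and yields $S=(0,\kappa_c)$.
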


Using numerical simulations to calculate the smallest discrete eigenvalue of $H_\kappa$ we see that at $\kappa \approx 1.546$ the ground state energy hits the essential spectrum. Thus, we expect that the true $\kappa_c$ is close to $1.546$. In Figure \ref{fig:figure_2} a numerical calculation of the critical charge $\kappa_c$ is plotted against the mass fraction $\sigma$. We see that as the mass of the impurity decreases the critical charge is increased, and thus bound states exists for impurities with larger charges. We have also plotted the coefficient $\beta$ in \eqref{eq:beta} against the mass fraction, and we see that the coefficient decreases as the mass of the impurity decreases.

\begin{figure}[!ht]
\centering
\includegraphics{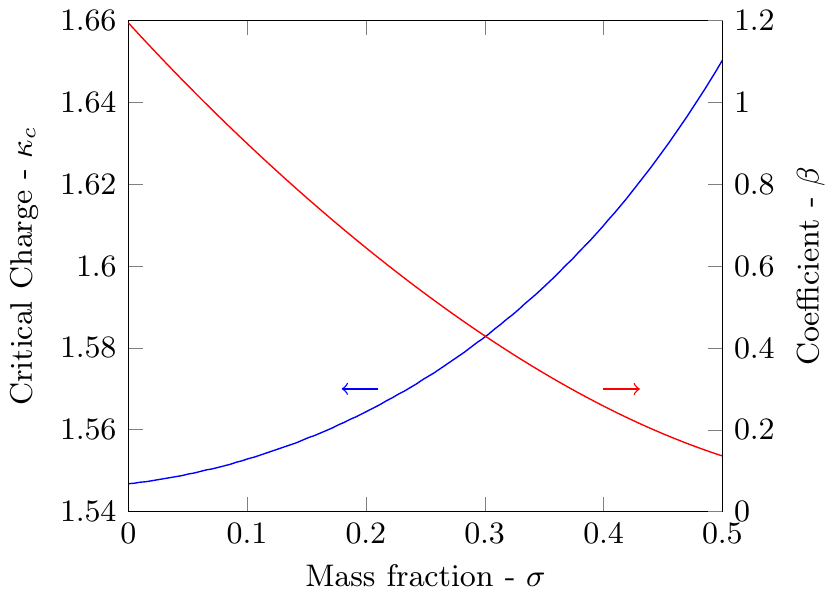}
\caption{Plot of the critical charge and the $\kappa^4$ coefficient of the discrete eigenvalue against the mass fraction $\sigma$.}
\label{fig:figure_2}
\end{figure}


\section{The Framework}\label{sec:framework}
In this section we introduce the framework we use to study the discrete spectrum of $H_{\kappa,\sigma}$. This framework has been used in \cite{cornean2008skeleton,cornean2006critical}, and we refer to those papers for more details.

We define $H_{\kappa,\sigma}$ as the unique self-adjoint operator associated to the sesqui-linear form
\begin{equation}
\begin{split}
Q(f,g) &= \frac{1}{2}\langle \mathbf{\nabla}f,A\nabla g\rangle_{L^2(\R^2)} - \langle f(x,x),g(x,x)\rangle_{L^2(\R)} \\
	& \quad + \kappa\langle f(0,y),g(0,y)\rangle_{L^2(\R)} - \kappa\langle f(x,0),g(x,0)\rangle_{L^2(\R)},
\end{split}
\end{equation}
on $H^1(\R^2) \times H^1(\R^2)$, where $H^1(\mathbb{R}^2)$ is the Sobolev space of first order and $A \in \R^{2\times 2}$ is the matrix
\begin{equation}
	A = \left[\begin{array}{c c} 1 & \sigma \\ \sigma & 1\end{array}\right].
\end{equation}
Let $\psi \in H^1(\mathbb{R}^2)$ and let $e \in \R^2$ be a unit vector. We define the trace operator $\tau_e: H^1(\mathbb{R}^2) \to L^2(\mathbb{R})$ by $(\tau_e\psi)(s) := \psi(s e)$. Let us write $\tau := (\tau_{e_1}, \tau_{e_2}, \tau_{e_{12}})$ as an operator defined on $H^1(\R^2)$ with values in $[L^2(\R)]^3 := \oplus_{i=1}^3 L^2(\R)$, where $\{e_1,e_2\}$ is the canonical basis in $\R^2$ and $e_{12} = 1/\sqrt{2}(e_1+e_2)$. Then $H_{\kappa,\sigma}$ is
\begin{equation}
	H_{\kappa,\sigma} = -\frac{1}{2}\Delta -\sigma\frac{\partial^2}{\partial x\partial y} + \tau^{*} g \tau,
\end{equation}
where $g := \diag\{-\kappa, \kappa, -1\} \in \R^{3 \times 3}$.

As a direct application of the Hunziker - Van Winter - Zhislin (HVZ) theorem\cite{simon1971quad} and a consequence of the signs of the potential terms in \eqref{eq:H_kappa} the following lemma holds.
\begin{lem}\label{lem:ess_spec}
The essential spectrum of $H_{\kappa,\sigma}$ is $$\left[\min\left\{-\frac{1}{4(1-\sigma)},-\frac{\kappa^2}{2}\right\},\infty\right).$$
\end{lem}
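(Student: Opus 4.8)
The plan is to apply the HVZ (Hunziker--Van Winter--Zhislin) theorem for three-body systems, which identifies the bottom of the essential spectrum of $H_{\kappa,\sigma}$ with the minimum over all two-cluster decompositions of the bottom of the spectrum of the corresponding cluster Hamiltonian. There are three ways to split the three particles into a pair plus a spectator: the electron--hole pair (interacting through $-\delta(x-y)$) with the impurity free, the electron--impurity pair (interacting through $\kappa\delta(x)$) with the hole free, and the hole--impurity pair (interacting through $-\kappa\delta(y)$) with the electron free. For each decomposition the spectator contributes $[0,\infty)$, so the bottom of the essential spectrum is the minimum of the three two-body ground state energies. Since the attractive pair interactions ($-\delta$) produce a negative bound state while the repulsive ones ($+\kappa\delta$) do not lower the spectrum below $0$, the binding energies of the electron--impurity and hole--impurity clusters are $0$.

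First I would set up each two-body subsystem in its own center-of-mass frame. For the electron--hole cluster, after removing the center of mass of the pair, one is left with a one-dimensional operator of the form $-\tfrac{1}{2\mu}\partial_z^2 - \delta(z)$ where the reduced mass parameter, accounting for the kinetic cross-term $-\sigma\partial_x\partial_y$ encoded in the matrix $A$, yields an effective coefficient; the lowest eigenvalue of $-a\,\partial_z^2 - \delta(z)$ is $-1/(4a)$, and a short computation gives $a = 1-\sigma$, hence ground-state energy $-\tfrac{1}{4(1-\sigma)}$. Here one must be careful that the spectator (the impurity), moving freely, contributes the half-line $[0,\infty)$ only after the full center of mass has already been removed; since the original operator $H_{\kappa,\sigma}$ already has its center of mass removed, the relevant cluster threshold is $-\tfrac{1}{4(1-\sigma)} + \inf\mathrm{spec}(-c\,\partial^2) = -\tfrac{1}{4(1-\sigma)}$. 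For the two clusters with repulsive $\kappa$-interactions, the corresponding two-body operator is $-(\text{positive})\partial^2 + \kappa\delta$, which is nonnegative, so those thresholds are $0$. However, one must also account for the possibility that all three particles escape to infinity independently (the fully disconnected channel), whose threshold is $0$; this is dominated by the attractive channel when $-\tfrac{1}{4(1-\sigma)} < 0$, which always holds. Taking the minimum over the three channels gives $\min\{-\tfrac{1}{4(1-\sigma)}, 0\}$; but the statement asserts $\min\{-\tfrac{1}{4(1-\sigma)}, -\kappa^2/2\}$, so I must revisit the $\kappa\delta(x) - \kappa\delta(y)$ structure.

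The resolution is that the attractive cluster is not only the electron--hole pair: the term $-\kappa\delta(y)$ is attractive between the hole and the impurity. So the hole--impurity cluster with the electron free has ground-state energy $-\kappa^2/(2m_{\mathrm{red}})$ for the appropriate reduced mass; matching conventions (the impurity has mass $M$, the hole mass $m$, and in the scaling chosen the kinetic term is $-\tfrac12\Delta$ with the cross term absorbed) yields $-\kappa^2/2$ when $\sigma$ is set consistently, and this channel competes with the electron--hole channel. Correspondingly the electron--impurity cluster ($+\kappa\delta(x)$, repulsive) plus free hole gives threshold $0$, and the all-free channel gives $0$; since $-\kappa^2/2 \le 0$ this is never the minimum. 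Hence the bottom of the essential spectrum is $\min\{-\tfrac{1}{4(1-\sigma)}, -\kappa^2/2\}$, and HVZ guarantees it is exactly $[\min\{-\tfrac{1}{4(1-\sigma)},-\kappa^2/2\},\infty)$. The main obstacle is bookkeeping: correctly translating the form \eqref{eq:H_kappa} with its non-diagonal kinetic matrix $A$ and the specific mass conventions into the reduced two-body problems so that the cluster thresholds come out as the clean expressions $-\tfrac{1}{4(1-\sigma)}$ and $-\kappa^2/2$, and verifying that the repulsive clusters and the fully disconnected channel contribute only $[0,\infty)$ and are therefore spectrally irrelevant. Once the channel decomposition and reduced masses are pinned down, the lemma follows directly from the standard HVZ theorem, as cited via \cite{simon1971quad}.
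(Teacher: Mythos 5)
Your proposal is correct and follows essentially the same route as the paper, which proves the lemma simply by invoking the HVZ theorem (in its quadratic-form version, \cite{simon1971quad}) together with the signs of the three delta interactions: the attractive electron--hole channel gives the threshold $-\tfrac{1}{4(1-\sigma)}$, the attractive hole--impurity channel gives $-\kappa^2/2$ (the cross term $-\sigma\partial_x\partial_y$ does not lower this, as one sees by fibering in the free coordinate and completing the square), and the repulsive and fully free channels give $0$. Your initial misclassification of the hole--impurity cluster as repulsive is corrected within the proposal itself, so the final argument matches the paper's.
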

The essential spectrum of $H_{\kappa,0}$  is illustrated by the shaded area in Figure \ref{fig:figure_1a}.
Write the operator in \eqref{eq:H_kappa} as $H_{\kappa,\sigma} = H_{0,\sigma} - V_\kappa$, where
\begin{equation*}
H_{0,\sigma} := -\frac{1}{2}\Delta - \sigma \frac{\partial^2}{\partial x\partial y}, \qquad V_{\kappa} := \delta(x-y) -\kappa\delta(x) + \kappa\delta(y).
\end{equation*}
If $R(z)$ denotes the full resolvent operator $(H_{\kappa,\sigma} - z)^{-1}$ and $R_0(z)$ denotes the resolvent $(H_{0,\sigma} - z)^{-1}$, then by Krein's formula
\begin{equation}\label{eq:kreins}
	R(z) = R_0(z) - R_0(z)\tau^{*}(g^{-1}+\tau R_0(z)\tau^{*})^{-1}\tau R_0(z), \quad z\in \rho(H_{0,\sigma}) \cap \rho(H_{\kappa,\sigma}).
\end{equation}

Define:
\begin{equation}\label{eq:operator_pencil}
	G_{\kappa,\sigma}(z) := g^{-1} + \tau R_0(z) \tau^{*}.
\end{equation}
It can be shown that $E < \inf \sigma_{ess}(H_{\kappa,\sigma})$ belongs to the discrete spectrum of $H_{\kappa}$ if and only if $G_{\kappa,\sigma}(E)$ is not invertible. Note that $G_{\kappa,\sigma}(z)$ is a $3 \times 3$ operator valued matrix which acts on $[L^2(\R)]^3$ and its entries are $z$ dependent. We will denote the elements of $\tau R_0(z) \tau^{*}$ by
\begin{equation}\label{eq:fourier_t}
\tau R_0(z) \tau^{*} = \left[
\begin{array}{c c c}
	T_{0,\sigma} & T_{1,\sigma} & T_{2,\sigma}^{*} \\
	T_{1,\sigma} & T_{0,\sigma} & T_{2,\sigma}^{*} \\
	T_{2,\sigma} & T_{2,\sigma} & T_{3,\sigma}
\end{array}
\right].
\end{equation}
The integral kernel of $R_0(z)$ is
\begin{equation}\label{eq:resol_kernel}
	R_0(\mathbf{x},\mathbf{y},z) = \frac{1}{2\pi^2}\int_{\R^2} \frac{e^{i\mathbf{k}\cdot(\mathbf{x}-\mathbf{y})}}{|\mathbf{k}|^2 + 2\sigma k_1k_2-2z}\di{k_1}\di{k_2}.
\end{equation}
Using the integral kernel of $R_0(z)$ in the Fourier representation we can explicitly calculate the integral kernels of the elements in $\tau R_0(z)\tau^{*}$ (the first and the the last operators are multiplication operators in Fourier space):
\begin{align}
\hat{T}_{0,\sigma}(s) &= \frac{1}{\sqrt{(1-\sigma^2)s^2-2z}} \label{eq:hat_T_0}\\
\hat{T}_{1,\sigma}(s,t) &= \frac{1}{\pi}\frac{1}{s^2+t^2+2\sigma st-2z}\label{eq:hat_T_1}\\
\hat{T}_{2,\sigma}(s,t) &= \frac{1}{\pi}\frac{1}{t^2+(s-t)^2+2\sigma t(s-t)-2z}\label{eq:hat_T_2}\\
\hat{T}_{3,\sigma}(s) &= \frac{1}{\sqrt{(1-\sigma^2)s^2-(1-\sigma)4z}}\label{eq:hat_T_3}.
\end{align}
From these expressions it is easy to see that the operators in \eqref{eq:fourier_t} are bounded if ${\rm Re}(z)<0$, and their norms go to zero when ${\rm Re}(z) \to -\infty$.


\section{Proof of Theorem \ref{thm:main_1}}\label{sec:small_kappa}
We are now ready to prove the first of our main results, i.e. the existence of a single discrete eigenvalue of $H_\kappa = H_{\kappa,0}$ when $\kappa$ becomes sufficiently small. In the following we will also denote $T_{i,0}$ by $T_i$.

Assume that $\kappa < 1/\sqrt{2}$. In that case it follows from Lemma \ref{lem:ess_spec} that any discrete eigenvalues $E \in \R$ must satisfy $E < - 1/4$. Moreover, $E$ is a discrete eigenvalue of $H_\kappa$ if and only if the operator $G_{\kappa}(E)$ is not invertible. Define $\tilde{G}_\kappa(E) := \kappa^{-1}G_\kappa(E)$ for $\kappa >0$, then $\tilde{G}_\kappa(E)$ is invertible when $G_\kappa(E)$ is invertible. In matrix representation we can write $\tilde{G}_\kappa(E)$ as
\begin{equation}
	\tilde{G}_\kappa(E) = \left[\begin{array}{c c c}
	-\mathbbm{1} & 0 & 0 \\
	0 & \mathbbm{1} & 0 \\
	0 & 0 & 0
	\end{array}\right] + \kappa \left[
	\begin{array}{c c c}
	T_{0} & T_{1} & T_{2}^{*} \\
	T_{1} & T_{0} & T_{2}^{*} \\
	T_{2} & T_{2} & -\mathbbm{1} + T_{3}
	\end{array}
	\right],
\end{equation}
where $\mathbbm{1}$ denotes the identity operator on $L^2(\R)$. In order to find the values $E$ where the inverse of $\tilde G_\kappa(E)$ does not exist, we use Feshbach's formula (see Equations (6.1) and (6.2) in \cite{nenciu1991dynamics}) to reduce the dimension of the operator pencil we are trying to invert.

 Let $\Pi$ be the orthogonal projection such that $\Pi [L^2(\R)]^3$ is isomorphic to $L^2(\R)$, and $\Pi \tilde{G}_\kappa(E) \Pi \cong -\kappa +\kappa T_{3}$. The congruence symbol simply means that $\Pi \tilde{G}_\kappa(E) \Pi$ can be identified with $-\kappa +\kappa T_{3}$ on $L^2(\R)$. Let $\Pi^\perp := \mathbbm{1}-\Pi$ correspond to the orthogonal subspace $\Pi^\perp[L^2(\R)]^3$ which is isomorphic to $[L^2(\R)]^2$. Then, we get
\begin{equation}
\Pi^\perp \tilde{G}_\kappa(E)\Pi^\perp \cong
\left[\begin{array}{c c}
	-\mathbbm{1} & 0 \\
	0 & \mathbbm{1} \\
	\end{array}\right] +
	\kappa\left[\begin{array}{c c}
	T_{0} & T_{1} \\
	T_{1} & T_{0} \\
	\end{array}\right].
\end{equation}
The next Lemma gives conditions under which the inverse of $\Pi^\perp \tilde{G}_\kappa(E)\Pi^\perp$ exists as an operator on $\Pi^\perp[L^2(\R)]^3$.

\begin{lem}\label{thm:existence-R}
There exists $ K > 0$ such that  $R(E):= [\Pi^\perp \tilde{G}_\kappa(E)\Pi^\perp]^{-1}$ exists in $\Pi^\perp[L^2(\R)]^3$ for all $E < -1/4$ and $0 < \kappa < K$.
\end{lem}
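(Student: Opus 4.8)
The plan is to treat $\Pi^\perp \tilde G_\kappa(E)\Pi^\perp$ as a small perturbation of its $\kappa$-independent part. On $\Pi^\perp[L^2(\R)]^3 \cong [L^2(\R)]^2$ write it as $D + \kappa M(E)$, where
\[
D := \begin{pmatrix} -\mathbbm{1} & 0 \\ 0 & \mathbbm{1} \end{pmatrix}, \qquad
M(E) := \begin{pmatrix} T_0 & T_1 \\ T_1 & T_0 \end{pmatrix},
\]
with $T_i = T_i(E)$ the $\sigma = 0$ operators in \eqref{eq:hat_T_0} and \eqref{eq:hat_T_1}. The key point is that $D$ is a self-adjoint involution, hence unitary, with $D^{-1} = D$ and $\|D^{-1}\| = 1$. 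Thus $D + \kappa M(E) = D\bigl(\mathbbm{1} + \kappa D^{-1} M(E)\bigr)$, and as soon as $\kappa\|M(E)\| < 1$ the Neumann series shows that this is invertible with $R(E) = \sum_{n\ge 0}(-\kappa)^n (D^{-1}M(E))^n D^{-1}$, convergent in operator norm and satisfying $\|R(E)\| \le (1 - \kappa\|M(E)\|)^{-1}$. The whole statement therefore reduces to a bound on $\|M(E)\|$ that is uniform in $E < -1/4$, after which one simply takes $K$ to be its reciprocal.

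To get that bound I would split $M(E)$ into its diagonal and off-diagonal parts, so that $\|M(E)\| \le \|T_0\| + \|T_1\|$, and estimate the two terms separately. In the Fourier representation $T_0$ is multiplication by $\hat T_0(s) = (s^2 - 2E)^{-1/2}$, so $\|T_0\| = \|\hat{T}_0\|_{L^\infty(\R)} = (-2E)^{-1/2} \le \sqrt 2$ for every $E \le -1/4$, since there $-2E \ge 1/2$. The operator $T_1$ has, again in Fourier variables, the nonnegative symmetric kernel $\hat T_1(s,t) = \pi^{-1}(s^2 + t^2 - 2E)^{-1}$; since $\pi^{-1}\int_\R (s^2 + t^2 - 2E)^{-1}\di{t} = (s^2 - 2E)^{-1/2}$, both its row sums and its column sums are bounded by $\sup_s (s^2 - 2E)^{-1/2} = (-2E)^{-1/2} \le \sqrt 2$, and Schur's test gives $\|T_1\| \le \sqrt 2$. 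Hence $\|M(E)\| \le 2\sqrt 2$ for all $E < -1/4$.

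Choosing $K := (2\sqrt 2)^{-1}$ then finishes the argument: for $0 < \kappa < K$ and any $E < -1/4$ one has $\kappa\|M(E)\| < 1$, so $R(E) = [\Pi^\perp\tilde G_\kappa(E)\Pi^\perp]^{-1}$ exists on $\Pi^\perp[L^2(\R)]^3$ through the Neumann series above, together with the uniform bound $\|R(E)\| \le (1 - 2\sqrt 2\,\kappa)^{-1}$. I do not expect a real obstacle here: once $D$ is seen to be invertible with unit norm, all the content lies in the uniform estimates for $\|T_0\|$ and $\|T_1\|$, which are elementary. The only place deserving a moment's care is the uniformity as $E \to -1/4^-$ — the energy at which the bottom of the essential spectrum sits — but the explicit bounds survive there because $-2E$ stays bounded below by $1/2$. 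The identical scheme, keeping the $\sigma$-dependent terms in \eqref{eq:hat_T_0} and \eqref{eq:hat_T_1}, would handle the general case $0 < \sigma < 1$.
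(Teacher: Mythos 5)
Your proof is correct and follows essentially the same route as the paper: factor $\Pi^\perp\tilde G_\kappa(E)\Pi^\perp$ through the involution $D=\diag(-\mathbbm{1},\mathbbm{1})$ and invert by a Neumann series once $\kappa\|M(E)\|<1$ uniformly in $E<-1/4$. The only difference is that you actually substantiate the uniform boundedness of $T_0$ and $T_1$ (sup of the Fourier multiplier, Schur's test on the positive kernel), which the paper merely asserts, yielding an explicit admissible $K$; this is a welcome sharpening, not a different argument.
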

\begin{proof}
We rewrite $\Pi^\perp \tilde{G}_\kappa(E)\Pi^\perp$ as
\begin{equation}
	\Pi^\perp \tilde{G}_\kappa(E)\Pi^\perp \cong \left(\left[\begin{array}{c c}
	\mathbbm{1} & 0 \\
	0 & \mathbbm{1} \\
	\end{array}\right] +
	\kappa\left[\begin{array}{c c}
	T_{0} & T_{1} \\
	T_{1} & T_{0} \\
	\end{array}\right]\left[\begin{array}{c c}
	-\mathbbm{1} & 0 \\
	0 & \mathbbm{1} \\
	\end{array}\right]\right)\left[\begin{array}{c c}
	-\mathbbm{1} & 0 \\
	0 & \mathbbm{1} \\
	\end{array}\right].
\end{equation}
The operators $T_{0}$ and $T_{1}$ are uniformly bounded on $L^2(\R)$ for $E < -1/4$. Thus, we can choose a constant $K > 0$ such that
\begin{equation}
	\left\|\kappa\left[\begin{array}{c c}
	T_{0} & T_{1} \\
	T_{1} & T_{0} \\
	\end{array}\right]\right\| < 1,
\end{equation}
for all $E < -1/4$ and $0 < \kappa < K$. Then the inverse $\Pi^\perp \tilde{G}_\kappa(E)\Pi^\perp$ exists for all $0 < \kappa < K$ and $E < -1/4$. Additionally, we can write $R(E)$ as a Neumann series
\begin{equation}\label{eq:neumann_series}
	R(E) \cong \left[\begin{array}{c c}
		-\mathbbm{1} & 0 \\ 0 & \mathbbm{1}
	\end{array}\right] + \sum_{j=1}^\infty (-1)^j\kappa^j\left[\begin{array}{c c}
		-\mathbbm{1} & 0 \\ 0 & \mathbbm{1}
	\end{array}\right]\left(\left[\begin{array}{c c}
		T_{0} & T_{1} \\ T_{1} & T_{0}
	\end{array}\right]\left[\begin{array}{c c}
		-\mathbbm{1} & 0 \\ 0 & \mathbbm{1}
	\end{array}\right]\right)^j,
\end{equation}
for all $0 < \kappa < K$.
\end{proof}

By Feshbach's formula and Lemma \ref{thm:existence-R} there exists $K$ sufficiently small such that if $0<\kappa<K$ and  $E < -1/4$, the inverse of $\tilde{G}_\kappa(E)$  exists if and and only if the inverse of
\begin{equation}
	S_W(E) = \Pi \tilde{G}_\kappa(E) \Pi - \Pi \tilde{G}_\kappa(E) \Pi^\perp R(E)\Pi^\perp \tilde{G}_\kappa(E) \Pi,
\end{equation}
exists as an operator restricted to the proper subspace. Using the matrix representation we can write $S_W(E)$ as
\begin{equation}\label{eq:S_W1}
	S_W(E) \cong \mathbbm{1} - T_{3} - \kappa \left[\begin{array}{c c}
		T_{2} & T_{2} \end{array}\right]
	R(E)\left[\begin{array}{c}
		T_{2}^{*} \\ T_{2}^{*}
	\end{array}\right], \quad \mbox{on } L^2(\R).
\end{equation}
Note that the contribution to $S_W(E)$ from the first term of $R(E)$ in \eqref{eq:neumann_series} is zero. To find the values where the inverse of $S_W(E)$ does not exist on $L^2(\R)$ we use the following version of the Birman-Schwinger\cite{simon1979functional} principle.

\begin{prop}\label{thm:first-BirSchw}
Let $E<-1/4$ and let $S_W(E)$ be given by \eqref{eq:S_W1}. There exist two bounded operators $V_1:[L^2(\R)]^2\to L^2(\R)$ and $V_{2}:L^2(\R) \to [L^2(\R)]^2$ such that $S_W(E)^{-1}$ exists  if and only if the inverse of
\begin{equation}\label{eq:BS-operator}
	\mathbbm{1}_2 -\kappa V_{2}(\mathbbm{1}-T_{3})^{-1}V_1
\end{equation}
exists on $[L^2(\R)]^2$, where $\mathbbm{1}_2$ is the identity operator on $[L^2(\R)]^2$. We call the operator in \eqref{eq:BS-operator} for the Birman-Schwinger operator.
\end{prop}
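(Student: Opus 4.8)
The plan is to read \eqref{eq:S_W1} as a perturbation of the boundedly invertible operator $\mathbbm{1}-T_3$ by the bounded, factorizable term $\kappa V_1V_2$ (with $V_1,V_2$ defined below), and then to pass to a reduced Birman--Schwinger operator via the standard push-through identity. The first ingredient is that $\mathbbm{1}-T_3$ is boundedly invertible on $L^2(\R)$ for every $E<-1/4$: by \eqref{eq:hat_T_3} with $\sigma=0$ the operator $T_3$ is multiplication in Fourier space by $\hat T_3(s)=(s^2-4E)^{-1/2}$, and since $-4E>1$ we get $0<\hat T_3(s)\le (-4E)^{-1/2}<1$; hence $\|T_3\|<1$ and $(\mathbbm{1}-T_3)^{-1}$ is the bounded Fourier multiplier $(1-\hat T_3(s))^{-1}$.

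Next I would factor the perturbation. Set $V_1:=\left[\begin{array}{c c}T_2 & T_2\end{array}\right]R(E)$, a bounded operator $[L^2(\R)]^2\to L^2(\R)$, and $V_2:=\left[\begin{array}{c}T_2^{*}\\ T_2^{*}\end{array}\right]$, a bounded operator $L^2(\R)\to[L^2(\R)]^2$; here $T_2$ and $T_2^{*}$ are bounded for $E<-1/4$ (as noted at the end of Section~\ref{sec:framework}) and $R(E)$ is bounded by Lemma~\ref{thm:existence-R}. With this notation \eqref{eq:S_W1} becomes $S_W(E)=(\mathbbm{1}-T_3)-\kappa V_1V_2=(\mathbbm{1}-T_3)\bigl(\mathbbm{1}-\kappa(\mathbbm{1}-T_3)^{-1}V_1V_2\bigr)$, so $S_W(E)^{-1}$ exists if and only if $\mathbbm{1}-\kappa(\mathbbm{1}-T_3)^{-1}V_1V_2$ is invertible on $L^2(\R)$.

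Finally I would invoke the elementary identity that for bounded operators $A\colon\mathcal H_2\to\mathcal H_1$ and $B\colon\mathcal H_1\to\mathcal H_2$ the operator $\mathbbm{1}_{\mathcal H_1}-AB$ is invertible if and only if $\mathbbm{1}_{\mathcal H_2}-BA$ is, with $(\mathbbm{1}_{\mathcal H_2}-BA)^{-1}=\mathbbm{1}_{\mathcal H_2}+B(\mathbbm{1}_{\mathcal H_1}-AB)^{-1}A$. Applying it with $\mathcal H_1=L^2(\R)$, $\mathcal H_2=[L^2(\R)]^2$, $A=\kappa(\mathbbm{1}-T_3)^{-1}V_1$ and $B=V_2$ shows that $\mathbbm{1}-\kappa(\mathbbm{1}-T_3)^{-1}V_1V_2$ is invertible on $L^2(\R)$ if and only if $\mathbbm{1}_2-\kappa V_2(\mathbbm{1}-T_3)^{-1}V_1$ is invertible on $[L^2(\R)]^2$, which is exactly \eqref{eq:BS-operator}; the same identity simultaneously yields an explicit expression for $S_W(E)^{-1}$ in terms of the inverse of the Birman--Schwinger operator, should it be needed later.

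The argument carries no genuine analytic difficulty: once $\kappa<K$ (so that $R(E)$ exists by Lemma~\ref{thm:existence-R}) every operator in sight is bounded, so there are no form-domain issues. The only points that need a little care are the uniform-in-$E$ invertibility of $\mathbbm{1}-T_3$ and, above all, keeping track of the two different Hilbert spaces $L^2(\R)$ and $[L^2(\R)]^2$ so that $V_1$ and $V_2$ are composed in the correct order when the push-through identity is applied; that bookkeeping is the place where a careless proof would go wrong.
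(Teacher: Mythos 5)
Your argument is correct, and it proves the equivalence by a genuinely more elementary route than the paper. The starting point is the same: you factor \eqref{eq:S_W1} as $S_W(E)=(\mathbbm{1}-T_3)\bigl(\mathbbm{1}-\kappa(\mathbbm{1}-T_3)^{-1}V_1V_2\bigr)$ using the bounded invertibility of $\mathbbm{1}-T_3$ for $E<-1/4$ (the paper factors $\mathbbm{1}-T_3$ out on the other side, which is immaterial). The difference lies in how the passage from $\mathbbm{1}-\kappa(\mathbbm{1}-T_3)^{-1}V_1V_2$ on $L^2(\R)$ to \eqref{eq:BS-operator} on $[L^2(\R)]^2$ is justified: the paper first takes $E$ sufficiently negative so that a Neumann series converges, resums it to obtain the explicit formula \eqref{eq:S_W_rewrite_two} for $S_W(E)^{-1}$ in terms of the Birman--Schwinger operator, and then extends both implications by meromorphic continuation in $E$, using the converse identity \eqref{eq:A} for the reverse direction. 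You instead invoke the standard two-space identity that $\mathbbm{1}-AB$ is boundedly invertible iff $\mathbbm{1}-BA$ is, with $(\mathbbm{1}-BA)^{-1}=\mathbbm{1}+B(\mathbbm{1}-AB)^{-1}A$; applied with $A=\kappa(\mathbbm{1}-T_3)^{-1}V_1$, $B=V_2$ this gives the equivalence directly for every $E<-1/4$ and $0<\kappa<K$, with no smallness or continuation step, and it also reproduces the explicit inverse formula if one multiplies the factorization back out, so nothing used later is lost. Two small remarks: the hypothesis $0<\kappa<K$ from Lemma \ref{thm:existence-R} is implicitly needed (as you note) since $S_W(E)$ itself contains $R(E)$; and you place $R(E)$ inside $V_1$ rather than inside $V_2$ as the paper does --- a perfectly admissible choice for the proposition as stated, but the text following the proposition uses the paper's specific $V_1,V_2$ to define $\Psi$ and $\Phi$, so with your convention the subsequent computation of $\langle\Phi,\Psi\rangle$ would have to be adjusted consistently.
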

\begin{proof}
Let $\Psi \in L^2(\R)$ and define $V_{2} : L^2(\R) \to [L^2(\R)]^2$ as
\begin{equation}\label{hc-100}
	V_{2}\Psi = R(E)\left[\begin{array}{c}
		T_{2}^{*}\Psi \\ T_{2}^{*}\Psi
	\end{array}\right].
\end{equation}
By the boundedness of $R(E)$ and $T_2^*$ it follows that $V_2$ is a bounded operator. Furthermore, let $\Psi = [\Psi_1,\Psi_2] \in [L^2(\R)]^2$ and define the operator $V_1:[L^2(\R)]^2 \to L^2(\R)$ by
\begin{equation}\label{hc-101}
	V_1\Psi = T_{2}\Psi_1 + T_{2}\Psi_2.
\end{equation}
The operator $V_1$ is bounded since $T_{2}$ is bounded. Using $V_1$ and $V_{2}$ it is possible to rewrite the operator $S_W(E)$ on $L^2(\R)$ as
\begin{align*}
	S_W(E) = \mathbbm{1} - T_{3} - \kappa V_1V_{2} = (\mathbbm{1} - \kappa V_1V_{2}(\mathbbm{1}-T_{3})^{-1})(\mathbbm{1}-T_{3})\label{eq:S_W_rewrite},
\end{align*}
since the bounded inverse of $\mathbbm{1}-T_{3}$ exists on $L^2(\R)$ for all $E<-1/4$. Consequently $S_W(E)$ exists if and only if $(\mathbbm{1}-\kappa V_1 V_{2}(1-T_{3})^{-1})^{-1}$ exists on $L^2(\R)$. But for any fixed $\kappa$ we can choose $E$ sufficiently negative such that $\|\kappa V_1V_{2}(\mathbbm{1}-T_{3})^{-1}\|<1$ and we can expand in a Neumann series
\begin{equation*}
 	(\mathbbm{1} - \kappa V_1V_{2}(\mathbbm{1}-T_{3})^{-1})^{-1} = \sum_{j=0}\kappa^j[V_1V_{2}(\mathbbm{1}-T_{3})^{-1}]^j.
\end{equation*}
Using resummation, we obtain that if $E$ is sufficiently negative we have
\begin{equation}\label{eq:S_W_rewrite_two}
	S_W(E)^{-1} = (\mathbbm{1} - T_{3})^{-1} + \kappa(\mathbbm{1}-T_{3})^{-1}V_1\left(\mathbbm{1}_2 - \kappa V_2(\mathbbm{1}-T_3)^{-1}V_1\right)^{-1}V_{2}(\mathbbm{1}-T_{3})^{-1}.
\end{equation}
Both the left-hand and the right-hand side define meromorphic functions for ${\rm Re}(E) < -1/4$, hence we can use the right-hand side to extend $S_W(E)^{-1}$ everywhere where the Birman-Schwinger operator exists. This proves one implication. 

Conversely, if we define
\begin{equation*}
  	A := \kappa V_2(\mathbbm{1} - T_3)^{-1}V_1,
\end{equation*}
equation \eqref{eq:S_W_rewrite_two} implies:
\begin{align*}
	\kappa V_2 S_W(E)^{-1} V_1 = A + A(\mathbbm{1}-A)^{-1}A =  -\mathbbm{1} + (\mathbbm{1}-A)^{-1} 
\end{align*}
or
\begin{equation}\label{eq:A}
 	(\mathbbm{1}-A)^{-1} = \mathbbm{1} + \kappa V_2 S_W(E)^{-1}V_1.
 \end{equation}
Now we can extend $(\mathbbm 1-A)^{-1}$ using the right-hand side.  This concludes the proof.
\end{proof}

Let $V_1$ and $V_{2}$ be as in the above proof.  Then the discrete eigenvalues $E$ of $H_\kappa$ for $0<\kappa<K$ are those $E<-1/4$ for which the inverse of the Birman-Schwinger operator \eqref{eq:BS-operator} does not exist on $[L^2(\R)]^2$. In Fourier representation the operator $(\mathbbm{1} - T_{3})^{-1}$ is given by multiplication with
\begin{equation}\label{eq:multiplication_operator}
	\frac{1}{\sqrt{2\pi}}\left(1 - \frac{1}{\sqrt{s^2 - 4E}}\right)^{-1} = \frac{1}{\sqrt{2\pi}}\frac{2}{s^2-4E-1} +\frac{1}{\sqrt{2\pi}} +\frac{1}{\sqrt{2\pi}}\frac{1}{\sqrt{s^2-4E}+1}.
\end{equation}

The first term on the right hand side has a singularity at $E = -1/4$. As $\kappa$ becomes small any possible discrete eigenvalues will be close to $-1/4$, and thus we expect the singular term to be the significant contribution. To simplify notation we define $\varepsilon := -4E - 1>0$. Taking the Fourier transform of each term on the right-hand side of \eqref{eq:multiplication_operator} we get the integral kernel of $(\mathbbm{1} - T_{3})^{-1}$:
\begin{align}
	(\mathbbm{1} - T_{3})^{-1}(x,y) &= \frac{1}{\sqrt{\varepsilon}}e^{-\sqrt{\varepsilon}|x-y|} + \delta(x-y) + \frac{1}{2\pi}\int_\R \frac{e^{is(x-y)}}{\sqrt{s^2+\varepsilon+1}+1}\di{s} \nonumber \\
	&= \frac{1}{\sqrt{\varepsilon}} - \int_0^{|x-y|}e^{-\sqrt{\varepsilon}s}\di{s}+ \delta(x-y) + \frac{1}{2\pi}\int_\R \frac{e^{is(x-y)}}{\sqrt{s^2+\varepsilon+1}+1}\di{s}. \label{eq:integral_1-T3}
\end{align}
From \eqref{eq:integral_1-T3} we see that there are four contributions to $V_2(\mathbbm{1}-T_3)^{-1}V_1$. We will show that the operators that we get from the three last terms in \eqref{eq:integral_1-T3} are uniformly bounded for $\varepsilon > 0$. Only the second term may pose problems due to its linear growth, while the third term is the distribution kernel of the identity operator and the fourth term is multiplication by a uniformly bounded function in Fourier space for $\varepsilon > 0$. 

We show that the operator corresponding to the second term is uniformly bounded. By the construction of $V_1$ and $V_2$ the contribution that might be problematic is the operator with the integral kernel
\begin{equation}
	0\leq C(x,y) = \int_{\R^2}T_2^*(x,t)\left (\int_0^{|t-t'|}e^{-\sqrt{\varepsilon}s}\di{s}\right ) T_2(t',y)\di{t}\di{t'},
\end{equation}
since the other factors from $V_1$ and $V_2$ are bounded. We will show that $C(x,y)$ is the integral kernel of a Hilbert-Schmidt operator. To do that, we need the following result which is based on the Paley-Wiener theorem\cite{reed1975ii}.
\begin{lem}\label{lem:bounded_b1}
There exists $\alpha>0$ sufficiently small such that the kernels $T_2(x,y)e^{\alpha|x|}$, $T_2^*(x,y)e^{\alpha|y|}$, $T_1(x,y)e^{\alpha|y|}$ and $T_1(x,y)e^{\alpha|x|}$ are in $L^2(\R^2)$ uniformly in $\varepsilon > 0$. 
\end{lem}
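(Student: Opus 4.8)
The plan is to reduce all four estimates to a single computation via the symmetries $T_1(x,y)=T_1(y,x)$ and $T_2^{*}(x,y)=T_2(y,x)$ (both kernels being real), which turn an exponential weight in $y$ into one in $x$; so it suffices to show that $e^{\alpha|x|}T_1(x,y)$ and $e^{\alpha|x|}T_2(x,y)$ lie in $L^2(\R^2)$ with norms bounded uniformly in $\varepsilon>0$. For this I would use the explicit Fourier kernels \eqref{eq:hat_T_1}, \eqref{eq:hat_T_2} at $\sigma=0$ together with the one-dimensional Paley--Wiener theorem \cite{reed1975ii}, applied in the $x$ variable. The decisive structural input is that on the relevant spectral range $-2z=-2E=(1+\varepsilon)/2\ge\tfrac12$, uniformly in $\varepsilon>0$.

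First I would continue the Fourier kernels analytically in $s$ across a fixed horizontal strip. For fixed $t\in\R$, $\hat T_1(\cdot,t)=\tfrac1\pi(s^2+t^2-2z)^{-1}$ has poles only at $s=\pm i\sqrt{t^2-2z}$, and $\sqrt{t^2-2z}\ge\sqrt{-2z}\ge1/\sqrt2$; likewise $\hat T_2(\cdot,t)$ has poles at $s=t\pm i\sqrt{t^2-2z}$. Hence both are holomorphic on $\{|\imag s|<1/\sqrt2\}$, uniformly in $t$ and $\varepsilon$. Then, fixing $0<\alpha<\tfrac12$, for $|b|\le\alpha$ one has
\begin{equation*}
\real\!\left[(s+ib)^2+t^2-2z\right]=s^2+t^2-2z-b^2\ge s^2+t^2+\tfrac12-\alpha^2 ,
\end{equation*}
and, using that $(s,t)\mapsto(s-t)^2+t^2$ is positive definite (so $\ge c_0(s^2+t^2)$ for some $c_0>0$), a similar bound for the denominator of $\hat T_2(s+ib,t)$, namely $\ge c_0(s^2+t^2)+\tfrac12-\alpha^2$. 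Therefore $|\hat T_j(s+ib,t)|\lesssim(1+s^2+t^2)^{-1}\in L^2(\R^2_{s,t})$ for $j=1,2$, with $\sup_{|b|\le\alpha}\|\hat T_j(\cdot+ib,\cdot)\|_{L^2(\R^2)}\le C(\alpha)$ uniformly in $\varepsilon$.

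Now I would invoke Paley--Wiener in the $x$ variable. For a.e.\ fixed $y$, the partial Fourier transform $(\mathcal F_x T_1)(s,y)=\tfrac1{\sqrt{2\pi}}\int_{\R}e^{ity}\hat T_1(s,t)\di t$ extends holomorphically in $s$ to the strip (by dominated convergence and Morera, since $|\hat T_1(s,t)|\lesssim(1+t^2)^{-1}$ there), so by the Paley--Wiener theorem $e^{\alpha|x|}T_1(\cdot,y)\in L^2(\R_x)$ with $\int_{\R}|e^{\alpha|x|}T_1(x,y)|^2\di x\le 2\sum_{\pm}\int_{\R}|(\mathcal F_x T_1)(s\pm i\alpha,y)|^2\di s$. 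Integrating in $y$, using Plancherel in the variable $y\leftrightarrow t$ and then Fubini, gives
\begin{equation*}
\|e^{\alpha|x|}T_1\|_{L^2(\R^2)}^2\le 2\sum_{\pm}\|\hat T_1(\cdot\pm i\alpha,\cdot)\|_{L^2(\R^2)}^2\le C(\alpha),
\end{equation*}
uniformly in $\varepsilon$, and the identical argument with $\hat T_2$ in place of $\hat T_1$ finishes the proof.

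The step I expect to be the true bottleneck is not any single estimate but the \emph{uniformity in $\varepsilon$}: everything is driven by the one inequality $-2z\ge\tfrac12$, which keeps the strip half-width $1/\sqrt2$, the decay constant $\tfrac12-\alpha^2$, and hence the final bound $C(\alpha)$ all away from $0$ as $\varepsilon\to0^{+}$. The contour-shift / Paley--Wiener steps are otherwise routine, since the kernels are completely explicit rational functions with the good decay $(1+s^2+t^2)^{-1}$ after the shift.
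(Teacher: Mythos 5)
Your proof is correct and follows essentially the same route as the paper: both arguments rest on the Paley--Wiener theorem applied to the explicit rational Fourier kernels \eqref{eq:hat_T_1}--\eqref{eq:hat_T_2}, with the uniformity in $\varepsilon>0$ coming from the fact that the constant term in the denominators stays bounded away from zero, and with the remaining kernels handled by symmetry (the paper simply says the other cases are ``similar''). The only difference is cosmetic: the paper continues $\hat T_2$ analytically in \emph{both} variables into a tube and invokes the two-variable Paley--Wiener theorem, obtaining the slightly stronger weight $e^{\alpha\sqrt{x^2+y^2}}$, whereas you continue only in the variable dual to the weighted one and use the one-dimensional theorem fiberwise together with Plancherel and Fubini -- both yield the stated uniform $L^2$ bounds.
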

\begin{proof}
We will show that $T_2(x,y)e^{\alpha|x|} \in L^2(\R^2)$ using the Paley-Wiener theorem. The proofs for the other integral kernels are similar and therefore not included. To apply Paley-Wiener we must show that $\hat{T}_{2}(s,t)$ can be analytically continued to a subset of the type $$\{\xi \in \C^2: |\imag(\xi)| < a\} \subset \C^2,$$ for some $a > 0$. Write $s=s_1+is_2$ and $t=t_1+it_2$, with $s_1,s_2,t_1,t_2 \in \R$, then
 \begin{equation*}
 	\hat{T}_{2}(s,t) = \frac{1}{\pi}\frac{1}{t_1^2-t_2^2+2it_1t_2+(s_1-t_1)^2-(s_2-t_2)^2+2i(s_1-t_1)(s_2-t_2)+\varepsilon +1}.
 \end{equation*}
 This function has no poles for $t_2$ and $s_2$ satisfying $t_2^2+(s_2-t_2)^2 < 1$, and is analytic on the subset
 \begin{equation*}
 	\left\{\xi\in \C^2 : |\imag(\xi)| < \frac{1}{2}\right\}\subset \C^2.
 \end{equation*}
Take $\eta = (s_2,t_2) \in \R^2$ such that $|\eta|<1/2$ and define $\delta:=\varepsilon +1 -t_2^2-(s_2-t_2)^2$. By the choice of $\eta$ we get $\delta > 0$, and the norm
\begin{equation*}
 	\|\hat{T}_{2}(\cdot + i\eta)\|_{L^2(\R^2)}^2 \leq \frac{1}{\pi^2}\int_{\R^2}\frac{1}{\left(t_1^2+(s_1-t_1)^2 + \delta\right)^2}\di{s_1}\di{t_1} = \frac{1}{\pi\delta} < \infty.
 \end{equation*}
Thus, $\|\hat{T}_{2,0}(\cdot + i\eta)\|_{L^2(\R^2)} < \infty$ for all such $\eta \in \R^2$. Then the Paley-Wiener theorem implies that $e^{\alpha\sqrt{x^2+y^2}}T_2(x,y) \in L^2(\R^2)$ for all $\alpha<1/2$. This concludes the proof of $T_2(x,y)e^{\alpha|x|} \in L^2(\R^2)$.
\end{proof}
We are now ready to show that $C(x,y)$ is an integral kernel of a Hilbert-Schmidt operator. To do that we use the following inequality
\begin{equation}\label{eq:last-term}
	C(x,y) \leq \int_{\R^2}T_2^*(x,t)|t|T_2(t',y)\di{t}\di{t'} +\int_{\R^2}T_2^*(x,t)|t'|T_2(t',y)\di{t}\di{t'},
\end{equation}
which follows from the definition of $C(x,y)$ and the inequality
\begin{equation*}
	\int_0^{|t-t'|}e^{-\sqrt{\varepsilon}s}\di{s}  \leq |t-t'| \leq |t| + |t'|.
\end{equation*}
We will show that the last the term in \eqref{eq:last-term} is in $L^2(\R^2)$ (the proof that the first term is also in $L^2(\R^2)$ is identical). Note that the integral is separable and
\begin{equation*}
	\int_{\R^2}T_2^*(x,t)|t'|T_2(t',y)\di{t}\di{t'} =: F(x)G(y).
\end{equation*}
We will show that $F,G \in L^2(\R)$. Applying the Cauchy-Schwarz inequality with respect to the $t$-integral and using Lemma \ref{lem:bounded_b1} we find
\begin{align}\label{hc-103}
	\|F\|_{L^2(\R)}^2 &= \int_\R\left|\int_\R T_2^*(x,t)\di{t}\right|^2\di{x} = \int_\R\left|\int_\R T_2^*(x,t)e^{\alpha|t|}e^{-\alpha|t|}\di{t}\right|^2\di{x} \nonumber \\
	& \leq C_\alpha \int_{\R^2}|T_2^*(x,t)|^2e^{2\alpha|t|}\di{t}\di{x} < \infty,
\end{align}
for $\alpha > 0$ sufficiently small. Similarly:
\begin{align*}
	\|G\|^2_{L^2(\R)} &= \int_\R\left|\int_\R|t'| T_2(t',y)\di{t'}\right|^2\di{y} \leq C_\alpha \int_\R\left|\int_\R e^{\frac{\alpha}{2}|t'|} T_2(t',y)\di{t'}\right|^2\di{y} \\
	&=C_\alpha \int_\R\left|\int_\R e^{-\frac{\alpha}{2}|t'|}e^{\alpha|t'|} T_2(t',y)\di{t'}\right|^2\di{y}  \leq \tilde{C}_\alpha \int_{\R^2}e^{2\alpha|t'|}|T_2(t',y)|^2\di{t'}\di{y} < \infty,
\end{align*}
again for $\alpha > 0$ sufficiently small. We conclude that $C(x,y) \in L^2(\R^2)$ uniformly in $\varepsilon > 0$. 

Using the expansion in \eqref{eq:integral_1-T3} the integral kernel of the Birman-Schwinger operator is
\begin{equation}\label{horia1}
	\mathbbm{1}_2 - \kappa \int_{\R^2}V_{2}(x,t)[(\mathbbm{1}-T_{3})^{-1}](t,t')V_1(t',y)\di{t}\di{t'} = \mathbbm{1}_2 - \frac{\kappa}{\sqrt{\varepsilon}}|\Psi\rangle \langle \Phi| + \kappa B_\varepsilon(x,y),
\end{equation}
where $B_\varepsilon(x,y)$ is the integral kernel of the uniformly bounded operator for $\varepsilon > 0$ that comes from the non-singular terms of \eqref{eq:integral_1-T3}. Also:
\begin{align}\label{eq:almostthere}
	\overline{\Phi(y)} &:= \int_\R V_1(x,y)\di{x},\quad \Psi(x) := \int_\R V_{2}(x,y)\di{y}.
\end{align}
The functions $\Psi$ and $\Phi$ are in $L^2(\R)$ and let us prove this for $\Psi$.  From the above definition and from \eqref{hc-100} we see that it is enough to prove that $\int_{\mathbb{R}} T_2^*(x,t)dt$ belongs to $L^2(\mathbb{R})$. But this is exactly what we did in \eqref{hc-103}.

By the usual factorization trick, the operator in \eqref{horia1} is invertible if and only if 
$$\mathbbm{1}_2 - \frac{\kappa}{\sqrt{\varepsilon}}|\Psi\rangle \langle \Phi| (\mathbbm{1}_2 + \kappa B_\varepsilon)^{-1}$$
is invertible. The later operator is not invertible if and only if $\varepsilon$ is a zero of  the following function 
$$(0,\infty)\ni \varepsilon \mapsto 1-\frac{\kappa}{\sqrt{\varepsilon}}\langle \Phi|(\mathbbm{1}_2 + \kappa B_\varepsilon)^{-1}|\Psi\rangle\in \mathbb{R}.$$
Introduce the new variable $r^2=\varepsilon$. The above function has a positive root $\varepsilon_0$ if and only if the map 
$$[-1,1]\ni r\mapsto f_\kappa(r):=\kappa\langle \Phi|(\mathbbm{1}_2 + \kappa B_{r^2})^{-1}|\Psi\rangle\in [-1,1]$$ 
has a positive fixed point $r_0>0$ and $r_0=\sqrt{\varepsilon_0}$. 

It is not difficult to extend the methods we used for proving that $B_\varepsilon$ was uniformly bounded in $\varepsilon>0$ in order to show that actually all the $\varepsilon$ dependent quantities are norm differentiable with globally bounded derivatives on $\varepsilon>0$. Thus $f_\kappa$ becomes a contraction if $\kappa$ is small enough and its unique fixed point $r_0$ can be computed by iteration starting from $r=0$. 

Using the definitions of $V_1$ and $V_{2}$ (in which we put $\varepsilon=0$ or equivalently $E=-1/4$) we can calculate the inner product $\langle \Phi,\Psi\rangle$ to get
\begin{equation*}
	\langle \Phi,\Psi\rangle|_{\varepsilon=0} = 8\kappa\left(\frac{4}{\pi} - 1\right) + \mathcal{O}(\kappa^2)>0.
\end{equation*}
Thus $r_0\sim \kappa \langle \Phi,\Psi\rangle \sim \kappa^2>0$ if $\kappa$ is small enough which leads to 
$\varepsilon_0 =r_0^2\sim \kappa^2 \langle \Phi,\Psi \rangle^2\sim \kappa^4$. Consequently,  the leading order behaviour of the discrete eigenvalue $E(\kappa)$ of $H_{\kappa}$ for $\kappa$ sufficiently small is
\begin{equation*}
	E(\kappa)= -\frac{1}{4}-16\left(\frac{4}{\pi}-1\right)^2\kappa^4 +\mathcal{O}(\kappa^5).
\end{equation*}
where we used the formula $\varepsilon = -4E - 1>0$. This concludes the first part of the proof of Theorem \ref{thm:main_1}.

We will now prove that the ground state energy is always non-degenerate (when it exists) by first showing that the heat semigroup $e^{-t H_{\kappa}}$ is positivity improving.  Some key formulas from \cite{albeverio1995fundamental} give the explicit expression of the heat kernel of $-\di^2/\di y^2 +\kappa \delta(y)$ from which we conclude that the integral kernel of
$$e^{-t (-\frac{1}{2}\Delta +\kappa \delta(y))}(x,y;x',y')=e^{t\frac{1}{2}\frac{\di^2}{\di x^2}}(x,x')e^{-t(-\frac{1}{2}\frac{\di^2}{\di y^2} +\kappa \delta(y))}(y,y'),\quad t>0,$$
is positive and point-wise smaller than $e^{t\frac{1}{2}\Delta}(x,y;x',y')$.  Applying the analogue of the Dyson formula between $e^{-t H_{\kappa}}$ and $e^{-t (-\frac{1}{2}\Delta +\kappa \delta(y))}$ (one has to be careful when deriving it due to the singularity of the delta "potentials") we see that the integral kernel of
$e^{-t H_{\kappa}}$ is larger or equal than that of $e^{-t (-\Delta +\kappa \delta(y))}$, hence it is also positivity improving. The Perron-Frobenius theorem\cite{reed1978iv} then guarantees the non-degeneracy of the lowest eigenvalue of $H_{\kappa}$, provided that such an eigenvalue exists.

In order to prove that a discrete eigenvalue exists for all $\kappa \in (0,1/\sqrt{2})$ we first need to extend our previous analysis to negative $\kappa$'s. It is not difficult to see from the expression of $H_\kappa$ that the previous existence result also holds for small negative $\kappa\neq 0$ as well. The family $H_\kappa$ is analytic of type B in the sense of Kato. The regular analytic perturbation theory allows one to extend the construction of a real analytic ground state energy $E(\kappa)$ from a neighborhood of $\kappa\neq 0$ to some maximal open intervals $I_\pm$ respectively included in  $(0,1/\sqrt{2})$ and $(-1/\sqrt{2},0)$. The only reason for which the right endpoint of $I_+$ might not go all the way to $1/\sqrt{2}$ is that $E(\kappa)$ might start increasing and eventually hit the bottom of the essential spectrum (i.e. $-1/4$) at some $\kappa_+<1/\sqrt{2}$.  We will show that this is not possible.

Fix $\epsilon>0$ small enough for which we know that $E(\pm \epsilon)$ exist. Then we can construct two families of real analytic normalized eigenvectors $\Psi_\kappa$ on $I_\pm$, starting from some given eigenvectors at $\kappa=\pm \epsilon$.

The operator which implements the interchange of $x$ with $y$ is denoted by $U$ and acts as $(Uf)(x,y)=f(y,x)$. It is unitary and $U=U^{-1}$. Moreover, we have
\begin{align*}
UH_{\kappa}U^{-1}=H_{-\kappa},\quad H_{\kappa}U^{-1}\Psi_{-\kappa}=E(-\kappa)U^{-1}\Psi_{-\kappa}.
\end{align*}
This shows that $E(-\kappa)$ is also an eigenvalue for $H_{\kappa}$, hence $E(\kappa)\leq E(-\kappa)$. By a similar argument we also obtain that $E(-\kappa)\leq E(\kappa)$, hence $E(\kappa)=E(-\kappa)$ as long as they exist. Moreover, there must exist a unimodular complex number $e^{i\phi(k)}$ (the phase can be chosen to be smooth on $|\kappa|>\epsilon$) such that
\begin{align}\label{hc-1}
\Psi_{\kappa}(x,y)= e^{i\phi(k)}\Psi_{-\kappa}(y,x),\quad \kappa\in I_\pm.
\end{align}

All the quantities defined above are smooth if $\kappa\neq 0$, but the eigenvectors are not a-priori $\kappa$-differentiable in the $H^1(\mathbb{R}^2)$ norm, only in $L^2(\R^2)$. We can formally apply the Feynman-Hellmann formula to the quadratic form and get:
\begin{align}\label{hc-3}
E'(\kappa)=\int_\R (|\Psi_\kappa(x,0)|^2-|\Psi_\kappa(0,x)|^2)dx.
\end{align}
The rigorous proof of this identity is based on the following identity 
$$\frac{1}{1+ \alpha E(\kappa)}=\langle \Psi_\kappa,  (1+\alpha H_{\kappa})^{-1}\Psi_\kappa\rangle, \quad 0< \alpha\ll 1,$$
in which we now can 
differentiate with respect to $\kappa$ in the norm topology and after that take the limit $\alpha\downarrow 0$.

We will now show that there cannot exist a $\kappa\in I_+$ such that $E'(\kappa)>0$. Assume the contrary and consider such a $\kappa$. Define the vector $\Phi(x,y)=\Psi_{-\kappa}(y,x)$ and choose $\kappa'\in I_+$ with $\kappa'>\kappa$. $\Phi$ is a normalized vector which belongs to the form domain of $H_{\kappa'}$. First using the min-max principle and second \eqref{hc-1} we have:
$$E(\kappa')\leq \langle \Phi,H_{\kappa'} \Phi\rangle =E(\kappa)+(\kappa'-\kappa)\int_\R (|\Phi(t,0)|^2-|\Phi(0,t)|^2)dt.
$$
Taking the limit $\kappa'\downarrow \kappa$ in $(E(\kappa')-E(\kappa))/(\kappa'-\kappa)$ leads to:
\begin{align}\label{hc-2}
E'(\kappa)\leq \int_\R (|\Phi(t,0)|^2-|\Phi(0,t)|^2)dt.
\end{align}
Due to \eqref{hc-1} we have $|\Phi(t,0)|^2=|\Psi_\kappa(0,t)|^2$ and $|\Phi(0,t)|^2=|\Psi_\kappa(t,0)|^2$, hence \eqref{hc-3} implies:
\begin{align}\label{hc-4}
\int_\R (|\Phi(t,0)|^2-|\Phi(0,t)|^2)dt=-\int_\R (|\Psi_\kappa(t,0)|^2-|\Psi_\kappa(0,t)|^2)dt=-E'(\kappa).
\end{align}
Introducing this identity back into \eqref{hc-2} we obtain $E'(\kappa) \leq 0$. We conclude that $E'(\kappa)\leq 0$ for all $\kappa\in I_+$, hence $E(\kappa)\leq E(\epsilon)<-1/4$ for $\kappa\in I_+$ which insures the existence of a positive minimal distance between $E(\kappa)$ and the essential spectrum. Consequently, the right endpoint $\kappa_+$ of $I_+$ cannot be smaller than $1/\sqrt{2}$ because in that case $E(\kappa_+):=\lim_{\kappa\uparrow \kappa_+}E(\kappa)$ would be an eigenvalue, thus $I_+$ could be extended a bit to the right of $\kappa_+$ by analytic perturbation theory. Hence the operator $H_{\kappa}$ must have at least one eigenvalue for
$0<\kappa\leq  1/\sqrt{2}$. This concludes the proof of Theorem \ref{thm:main_1}.


\section{Proof of Theorem \ref{thm:critical_kappa}}\label{sec:large_kappa}
In this section we prove the second main result, namely that if $\tilde\kappa > 1$ is fixed, then  $H_{\kappa,\tilde\kappa}$ has no discrete eigenvalues for $\kappa$ in a connected neighborhood of $+\infty$. The proof is based on a similar method as used in Section \ref{sec:small_kappa}. 

Since $H_\kappa$ and $H_{\kappa,\tilde\kappa}$ only differ in the positive interaction term while the bottom of the essential spectrum is given by the negative interaction terms, we have that $\sigma_{ess}(H_{\kappa,\tilde\kappa}) = \sigma_{ess}(H_\kappa)$. We assume that $\kappa \geq 1/\sqrt{2}$. Then Lemma \ref{lem:ess_spec} implies:
\begin{equation*}
\sigma_{ess}(H_{\kappa,\tilde\kappa}) = \left[-\frac{\kappa^2}{2}, \infty\right).
\end{equation*}
The framework described in Section \ref{sec:framework} is easily generalized to the operator $H_{\kappa,\tilde\kappa}$. Consequently, $E < -\kappa^2/2$ is a discrete eigenvalue of $H_{\kappa,\tilde\kappa}$ if and only if the inverse of the operator $\mathcal{G}_{\kappa,\tilde\kappa}(E)$ does not exist on $[L^2(\R)]^3$, where $\mathcal{G}_{\kappa,\tilde\kappa}(E)$ is given by
\begin{equation*}
  \mathcal{G}_{\kappa,\tilde\kappa}(E) = g^{-1} + \tau R_0(E)\tau^*,
\end{equation*}
and $\tau R_0(E) \tau^{*}$ is as before but $g$ is changed to $ \diag\{-\kappa,\tilde\kappa,-1\}$. To study when the operator $\mathcal{G}_{\kappa,\tilde\kappa}(z)$ is invertible, we scale it using the unitary operator $U_\kappa$ which acts on $L^2(\R)$ by $[U_\kappa f](x) = \sqrt{\kappa}f(\kappa x)$. We have:
\begin{align*}
	[U_\kappa \hat{T}_{1}(E) U_\kappa^{*}f](x) &=  \frac{1}{\pi\kappa}\int_\R \frac{1}{x^2+y^2-\frac{2E}{\kappa^2}}f(y)\di{y}.
\end{align*}
Define a rescaled energy $\varepsilon := -2E/\kappa^2>1$. Thus $U_\kappa \hat{T}_{1}(E)U_\kappa^{*} = \frac{1}{\kappa}\hat{T}_{1}(-\varepsilon)$. Equivalent results hold for $\hat{T}_{0},\hat{T}_{2},\hat{T}_{2}^*$ and $\hat{T}_{3}$. Consequently, the operator $\mathcal{G}_{\kappa,\tilde\kappa}(E)$ is unitarily equivalent to the operator:
\begin{equation}\label{eq:G_kappa}
	G_{\kappa,\tilde\kappa}(-\varepsilon) := \left[
	\begin{array}{c c c}
	-\frac{1}{\kappa} & 0 & 0 \\ 0 & \frac{1}{\tilde\kappa} & 0 \\ 0 & 0 & -\mathbbm{1}
	\end{array}
	\right] +
	\frac{1}{\kappa}\left[
	\begin{array}{c c c}
	T_0(-\varepsilon) & T_{1}(-\varepsilon) & T_{2}^{*}(-\varepsilon) \\ T_{1}(-\varepsilon) & T_{0}(-\varepsilon) & T_{2}^{*}(-\varepsilon) \\ T_{2}(-\varepsilon) & T_{2}(-\varepsilon) & T_{3}(-\varepsilon)
	\end{array}
	\right].
\end{equation}

As mentioned the strategy we apply to show the absence of discrete eigenvalues is basically the same as in Section \ref{sec:small_kappa}, i.e. some applications of Feshbach's formula and the Birman-Schwinger principle. So we begin by choosing the orthogonal projection $\Pi$ on $[L^2(\R)]^3$ which satisfies
\begin{equation}
	\Pi G_{\kappa,\tilde\kappa}(-\varepsilon) \Pi \cong \frac{1}{\kappa}\left[\begin{array}{c c} -\mathbbm{1}+T_{0}(-\varepsilon) & T_{1}(-\varepsilon) \\ T_{1}(-\varepsilon) & \frac{\kappa}{\tilde\kappa}+T_0(-\varepsilon) \end{array}\right],
\end{equation}
on $\Pi[L^2(\R)]^3$. We will also need the projection on the orthogonal subspace of $\Pi[L^2(\R)]^3$, which is defined by $\Pi^\perp:=\mathbbm{1}-\Pi$.

\begin{lem}\label{lem:invertible_R}
Let $G_{\kappa,\tilde\kappa}(-\varepsilon)$ be given by \eqref{eq:G_kappa}. Then $R(\varepsilon):=[\Pi^\perp G_{\kappa,\tilde\kappa}(-\varepsilon) \Pi^\perp]^{-1}$
exists as a bounded operator on  the proper subspace for all $\varepsilon > 1$ and $\kappa > 1/\sqrt{2}$.
\end{lem}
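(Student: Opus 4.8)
\section*{Proof proposal for Lemma \ref{lem:invertible_R}}

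The plan is to notice that the subspace $\Pi^\perp[L^2(\R)]^3$ is exactly the third summand, hence isomorphic to a single copy of $L^2(\R)$, and that on it $\Pi^\perp G_{\kappa,\tilde\kappa}(-\varepsilon)\Pi^\perp$ is identified with the scalar block $-\mathbbm{1} + \tfrac{1}{\kappa}T_3(-\varepsilon)$ read off from the $(3,3)$ entry of \eqref{eq:G_kappa}. The lemma then reduces to the bounded invertibility of $\mathbbm{1} - \tfrac{1}{\kappa}T_3(-\varepsilon)$ on $L^2(\R)$ for all $\varepsilon > 1$ and $\kappa > 1/\sqrt{2}$.

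For this I would use \eqref{eq:hat_T_3} with $\sigma = 0$ and $z = -\varepsilon$, which tells us that $T_3(-\varepsilon)$ acts in the Fourier representation as multiplication by the positive bounded function $\hat T_3(-\varepsilon)(s) = (s^2 + 4\varepsilon)^{-1/2}$. Since this multiplier attains its maximum $\tfrac{1}{2\sqrt{\varepsilon}}$ at $s = 0$, the operator $\tfrac{1}{\kappa}T_3(-\varepsilon)$ is self-adjoint with norm $\tfrac{1}{2\kappa\sqrt{\varepsilon}}$. The hypotheses $\kappa > 1/\sqrt{2}$ and $\varepsilon > 1$ give $2\kappa\sqrt{\varepsilon} > \sqrt{2}$, so $\|\tfrac{1}{\kappa}T_3(-\varepsilon)\| < \tfrac{1}{\sqrt{2}} < 1$. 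Hence the Neumann series $\sum_{j\ge 0}\kappa^{-j}T_3(-\varepsilon)^j$ converges in operator norm, $\mathbbm{1} - \tfrac{1}{\kappa}T_3(-\varepsilon)$ is invertible, and $R(\varepsilon)$ is identified with $-(\mathbbm{1} - \tfrac{1}{\kappa}T_3(-\varepsilon))^{-1}$, a bounded operator satisfying the $(\varepsilon,\kappa)$-uniform bound $\|R(\varepsilon)\| \le (1 - 1/\sqrt{2})^{-1}$.

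There is essentially no obstacle here --- unlike Lemma \ref{thm:existence-R}, where the inverted block was a genuine $2\times 2$ operator matrix requiring its own smallness argument, the orthogonal complement is now one-dimensional in the matrix index, so a pointwise bound on the Fourier multiplier $\hat T_3$ together with the elementary inequality $2\kappa\sqrt{\varepsilon} > \sqrt{2}$ does the whole job. The only point worth recording carefully is that the resulting bound on $\|R(\varepsilon)\|$ is uniform in both $\varepsilon > 1$ and $\kappa > 1/\sqrt{2}$ (and that, as for $B_\varepsilon$ in the previous section, $R(\varepsilon)$ depends norm-smoothly on $\varepsilon$), since this uniformity and regularity are what the subsequent Feshbach reduction and Birman--Schwinger argument will need.
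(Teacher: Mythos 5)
Your argument is correct and is essentially the paper's own proof: identify $\Pi^\perp G_{\kappa,\tilde\kappa}(-\varepsilon)\Pi^\perp$ with the scalar block $-\mathbbm{1}+\kappa^{-1}T_3$, bound the sup of its positive Fourier multiplier strictly below $1$ using $\kappa>1/\sqrt{2}$, $\varepsilon>1$, and invert by a Neumann series. The only caveat is a factor-of-two convention: the operator called $T_3(-\varepsilon)$ in \eqref{eq:G_kappa} arises from the unitary rescaling and is multiplication by $(s^2+2\varepsilon)^{-1/2}$ rather than $(s^2+4\varepsilon)^{-1/2}$, so its norm is $1/(\kappa\sqrt{2\varepsilon})<1$ --- which still proves the lemma --- but the uniform bound $\|R(\varepsilon)\|\le (1-1/\sqrt{2})^{-1}$ you record does not hold under that convention as $\kappa\downarrow 1/\sqrt{2}$ and $\varepsilon\downarrow 1$.
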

\begin{proof}
By the definition of $\Pi^\perp$ we have $\Pi^\perp G_{\kappa,\tilde\kappa}(-\varepsilon) \Pi^\perp \cong -\mathbbm{1}+\kappa^{-1}T_3(-\varepsilon)$. We need to check the invertibility of   $\mathbbm{1}-\kappa^{-1}T_{3}(-\varepsilon)$ on $L^2(\R)$. In the Fourier representation, this operator is a multiplication operator with the function
\begin{equation}
	\left(1-\frac{1}{\kappa}\frac{1}{\sqrt{s^2+2\varepsilon}}\right)^{-1}.
\end{equation}
Thus, the norm $\|\kappa^{-1}T_{3}\| < 1/(\kappa\sqrt{2})\leq 1$ for all $\kappa \geq 1/\sqrt{2}$ and $\varepsilon >1$. Consequently, $\Pi^\perp G_\kappa(-\varepsilon) \Pi^\perp$ is invertible on $L^2(\R)$ for all $\kappa > 1/\sqrt{2}$ and $\varepsilon >1$.
\end{proof}

By Feshbach's formula and Lemma \ref{lem:invertible_R} the inverse of $G_{\kappa,\tilde\kappa}(-\varepsilon)$ exists if the inverse of 
\begin{equation}
  S_W(\varepsilon) := \Pi G_{\kappa,\tilde\kappa}(-\varepsilon)\Pi -  \Pi G_{\kappa,\tilde\kappa}(-\varepsilon)\Pi^\perp R(\varepsilon) \Pi^\perp G_{\kappa,\tilde\kappa}(-\varepsilon)\Pi
\end{equation}
exists as an operator on $[L^2(\R)]^2$. In order to simplify notation, we stop writing the explicit dependence on $\varepsilon$ of the various $T$-operators. We get the following expression for $S_W(\varepsilon)$:
\begin{equation}\label{eq:S_W}
	S_W(\varepsilon) \cong \left[\begin{array}{c c} -\mathbbm{1}+T_{0} & T_{1} \\ T_{1} & \frac{\kappa}{\tilde\kappa}+T_{0} \end{array}\right] + \frac{1}{\kappa}\left[\begin{array}{c} T_{2}^{*} \\ T_{2}^{*} \end{array}\right]\left(\mathbbm{1} - \frac{1}{\kappa}T_{3}\right)^{-1}\left[\begin{array}{c c}T_{2} & T_{2}\end{array}\right].
\end{equation}
To find the conditions for the inverse of $S_W(\varepsilon)$ to exist on $[L^2(\R)]^2$, we apply Feshbach's formula again. Consequently, we need to define another pair of orthogonal projections $\tilde\Pi$ and $\tilde\Pi^\perp:= \mathbbm{1}-\tilde\Pi$ on $[L^2(\R)]^2$ such that
\begin{equation}
	\tilde\Pi S_W(\varepsilon)\tilde\Pi \cong -\mathbbm{1} + T_{0} + \frac{1}{\kappa}T_{2}^*(\mathbbm{1} - \kappa^{-1}T_{3})^{-1}T_{2} \quad \mbox{on} \quad L^2(\R).
\end{equation}
\begin{lem}\label{lem:invertible_Rtilde}
Let $S_W(\varepsilon)$ be given by \eqref{eq:S_W}, and let $\tilde\Pi^\perp$ be the orthogonal projection on $[L^2(\R)]^2$ such that
\begin{equation}
	\tilde\Pi^\perp S_W(\varepsilon)\tilde\Pi^\perp \cong \frac{\kappa}{\tilde\kappa} + T_{0} + \frac{1}{\kappa}T_{2}^*(\mathbbm{1} - \kappa^{-1}T_{3})^{-1}T_{2},
\end{equation}
on $L^2(\R)$. Then $\tilde R(\varepsilon):= [\tilde\Pi^\perp S_W(\varepsilon)\tilde\Pi^\perp]^{-1}$ exists on the proper subspace for all $\kappa \geq 1/\sqrt{2}$ and $\varepsilon > 1$.
\end{lem}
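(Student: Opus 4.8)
The plan is to prove invertibility of $\tilde\Pi^\perp S_W(\varepsilon)\tilde\Pi^\perp$ by exhibiting a strictly positive lower bound for it as a self-adjoint operator on $L^2(\R)$. After the congruence recorded in the statement, the only summand without a definite sign is the scalar $\kappa/\tilde\kappa$, which is \emph{positive}; so once the other two summands $T_0$ and $\tfrac1\kappa T_2^*(\mathbbm{1}-\kappa^{-1}T_3)^{-1}T_2$ are seen to be nonnegative operators, we get $\tilde\Pi^\perp S_W(\varepsilon)\tilde\Pi^\perp\ge(\kappa/\tilde\kappa)\mathbbm{1}$, uniformly in $\varepsilon>1$, and we are done. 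First I would note that $T_0(-\varepsilon)$ and $T_3(-\varepsilon)$ are, in the Fourier representation, multiplication by the strictly positive functions $\hat T_0(s)=(s^2+2\varepsilon)^{-1/2}$ and $\hat T_3(s)>0$; hence $T_0\ge 0$ and $T_3\ge 0$ on $L^2(\R)$ for every $\varepsilon>1$, and in particular both are self-adjoint.

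Next I would dispose of the middle term. Since $T_3\ge 0$ and $\|\kappa^{-1}T_3(-\varepsilon)\|<1$ for all $\kappa\ge 1/\sqrt2$ and $\varepsilon>1$ (this is precisely the estimate established in the proof of Lemma~\ref{lem:invertible_R}), the operator $\mathbbm{1}-\kappa^{-1}T_3$ is self-adjoint with spectrum contained in $[\,1-\|\kappa^{-1}T_3\|,\,1\,]\subset(0,1]$; it is therefore boundedly invertible and $(\mathbbm{1}-\kappa^{-1}T_3)^{-1}\ge\mathbbm{1}\ge 0$. Writing $B:=(\mathbbm{1}-\kappa^{-1}T_3)^{-1}\ge 0$ and recalling from Section~\ref{sec:framework} that $T_2(-\varepsilon)$ is bounded for $\mathrm{Re}(z)<0$, we get $\langle f,\tfrac1\kappa T_2^*B\,T_2 f\rangle=\tfrac1\kappa\langle T_2 f,B\,T_2 f\rangle\ge 0$ for every $f\in L^2(\R)$, i.e. $\tfrac1\kappa T_2^*(\mathbbm{1}-\kappa^{-1}T_3)^{-1}T_2\ge 0$. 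Combining the two steps,
\begin{equation*}
\tilde\Pi^\perp S_W(\varepsilon)\tilde\Pi^\perp \;\cong\; \frac{\kappa}{\tilde\kappa}\,\mathbbm{1} + T_0 + \frac{1}{\kappa}\,T_2^*(\mathbbm{1}-\kappa^{-1}T_3)^{-1}T_2 \;\ge\; \frac{\kappa}{\tilde\kappa}\,\mathbbm{1} \;>\;0 ,
\end{equation*}
so $\tilde\Pi^\perp S_W(\varepsilon)\tilde\Pi^\perp$ is invertible on the proper subspace with $\|\tilde R(\varepsilon)\|\le\tilde\kappa/\kappa$, for all $\kappa\ge 1/\sqrt2$ and $\varepsilon>1$.

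I do not expect a genuine obstacle here: the statement reduces to this positivity observation. The one point worth stressing is \emph{why} one argues this way rather than by a naive operator-norm estimate: for $\kappa$ only bounded below by $1/\sqrt2$, the ratio $\kappa/\tilde\kappa$ need not be large (it is small when $\tilde\kappa$ is large), so one cannot hope to dominate $T_0+\tfrac1\kappa T_2^*(\cdots)T_2$ in norm by $\tfrac{\kappa}{\tilde\kappa}\mathbbm{1}$; it is the sign structure that carries the argument. The only routine checks are the positivity of the Fourier symbols of $T_0$ and $T_3$ and the elementary implication that $\|\kappa^{-1}T_3\|<1$ together with $T_3\ge 0$ gives $(\mathbbm{1}-\kappa^{-1}T_3)^{-1}\ge 0$.
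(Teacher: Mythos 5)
Your proof is correct and follows exactly the paper's own (much terser) argument: the paper's proof of Lemma \ref{lem:invertible_Rtilde} simply notes that $T_0$ and $\kappa^{-1}T_2^*(\mathbbm{1}-\kappa^{-1}T_3)^{-1}T_2$ are bounded and positive, so the operator is bounded below by $\kappa/\tilde\kappa>0$ and hence invertible. You have merely filled in the routine verifications (positivity of the Fourier symbols and the norm bound on $\kappa^{-1}T_3$ from Lemma \ref{lem:invertible_R}), which is fine.
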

\begin{proof}
The proof follows from the fact that $T_{0}$ and $\kappa^{-1} T_{2}^*(\mathbbm{1} - \kappa^{-1}T_{3})^{-1}T_{2}$ are bounded and positive for all $\kappa \geq 1/\sqrt{2}$ and $\varepsilon > 1$.
\end{proof}

Lemma \ref{lem:invertible_Rtilde} and Feshbach's formula implies that the inverse of $G_{\kappa,\tilde\kappa}(\varepsilon)$ exists if the inverse of
\begin{equation}
	\tilde{S}_W(\varepsilon) \cong \mathbbm{1}-T_{0} -\frac{1}{\kappa}T_{2}^{*}(\mathbbm{1}-\kappa^{-1}T_{3})^{-1}T_{2} + W_{\kappa,\tilde\kappa}(\varepsilon)
\end{equation}
exists as an operator on $L^2(\R)$, where
\begin{align}
	W_{\kappa,\tilde\kappa}(\varepsilon) &:= D\left(\frac{\kappa}{\tilde\kappa}+T_{0}+\frac{1}{\kappa}T_{2}^*\left(\mathbbm{1}-\frac{1}{\kappa}T_{3}\right)^{-1}T_{2}\right)^{-1}D, \label{hc-10}\\
 D &:= T_{1} +\frac{1}{\kappa}T_{2}^{*}\left(\mathbbm{1}-\frac{1}{\kappa}T_{3}\right)^{-1}T_{2}.\label{hc-11}
\end{align}
The idea is to apply the Birman-Schwinger principle to study for which values of $\varepsilon > 1$ and $\kappa \geq 1/\sqrt{2}$ the inverse of $\tilde S_W(\varepsilon)$ does not exist on $L^2(\R)$. Before we do that we rewrite $\tilde S_W(\varepsilon)$ a bit. Factorizing $\kappa/\tilde\kappa$ in $W_{\kappa,\tilde\kappa}(\varepsilon)$ we can write
\begin{equation}\label{eq:S_W-tilde}
  \tilde S_W(\varepsilon) \cong \mathbbm{1} - T_0 +\frac{1}{\kappa}\widetilde{W}_{\kappa,\tilde\kappa},
\end{equation}
where
\begin{equation}
  \widetilde{W}_{\kappa,\tilde\kappa} = -T_2^*\left(\mathbbm{1}-\frac{1}{\kappa}T_3\right)^{-1}T_2 + \tilde\kappa D\left(\mathbbm{1}+\frac{\tilde\kappa}{\kappa}T_{0}+\frac{\tilde\kappa}{\kappa^2}T_{2}^*\left(\mathbbm{1}-\frac{1}{\kappa}T_{3}\right)^{-1}T_{2}\right)^{-1} D.
\end{equation}

We are now ready to construct the Birman-Schwinger operator for $\tilde S_W(\varepsilon)$ given by \eqref{eq:S_W-tilde}.
\begin{prop}\label{thm:birman-schwinger}
Let $\tilde{S}_W(\varepsilon)$ be as in \eqref{eq:S_W-tilde}, and let $\tilde\kappa > 0$ be fixed, $\kappa \geq 1/\sqrt{2}$ and $\varepsilon > 1$. Then there exists bounded operators $V_1:[L^2(\R)]^2 \to L^2(\R)$ and $V_2:L^2(\R) \to [L^2(\R)]^2$ such that  $\tilde{S}_W(\varepsilon)$ is invertible if and only if
\begin{equation}\label{eq:birman-schwinger}
	 \mathbbm{1}_2+ \frac{1}{\kappa}V_2 (\mathbbm{1} - T_{0})^{-1} V_1
\end{equation}
is invertible on $[L^2(\R)]^2$.
\end{prop}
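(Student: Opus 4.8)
The plan is to repeat, for the operator $\tilde S_W(\varepsilon)$ in \eqref{eq:S_W-tilde}, the reduction used in Proposition \ref{thm:first-BirSchw}: factor its ``potential'' part $\tfrac1\kappa\widetilde{W}_{\kappa,\tilde\kappa}$ as $\tfrac1\kappa V_1V_2$ with bounded operators $V_1\colon[L^2(\R)]^2\to L^2(\R)$ and $V_2\colon L^2(\R)\to[L^2(\R)]^2$, and then trade an inversion on $L^2(\R)$ for one on $[L^2(\R)]^2$ by the push-through identity $(\mathbbm{1}+XY)^{-1}=\mathbbm{1}-X(\mathbbm{1}+YX)^{-1}Y$, which is a purely algebraic equivalence valid for bounded $X,Y$. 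As a preliminary I record that $(\mathbbm{1}-T_0)^{-1}$ is bounded on $L^2(\R)$ for every $\varepsilon>1$: in Fourier space $T_0$ acts as multiplication by $(s^2+2\varepsilon)^{-1/2}$, so $\|T_0\|=(2\varepsilon)^{-1/2}<1$. Hence from \eqref{eq:S_W-tilde} we may write $\tilde S_W(\varepsilon)=\bigl(\mathbbm{1}+\tfrac1\kappa\widetilde{W}_{\kappa,\tilde\kappa}(\mathbbm{1}-T_0)^{-1}\bigr)(\mathbbm{1}-T_0)$, so that $\tilde S_W(\varepsilon)$ is invertible if and only if the first factor is.

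To construct $V_1,V_2$ I would exhibit $\widetilde{W}_{\kappa,\tilde\kappa}$ as a difference of two manifestly nonnegative operators. Put $N:=\mathbbm{1}-\kappa^{-1}T_3$ and $M:=\mathbbm{1}+\tfrac{\tilde\kappa}{\kappa}T_0+\tfrac{\tilde\kappa}{\kappa^2}T_2^*N^{-1}T_2$, so that $\widetilde{W}_{\kappa,\tilde\kappa}=-T_2^*N^{-1}T_2+\tilde\kappa\,DM^{-1}D$ with $D=T_1+\tfrac1\kappa T_2^*N^{-1}T_2$ as in \eqref{hc-11}. By the estimate in the proof of Lemma \ref{lem:invertible_R} one has $\|\kappa^{-1}T_3\|<1$ for $\kappa\ge1/\sqrt2$, $\varepsilon>1$, hence $N\ge(1-\|\kappa^{-1}T_3\|)\mathbbm{1}>0$; since $T_0\ge0$ (a positive Fourier multiplier) and $T_2^*N^{-1}T_2\ge0$, also $M\ge\mathbbm{1}>0$. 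Consequently $N^{-1}$ and $M^{-1}$ are bounded positive operators admitting bounded positive square roots, $D$ is bounded and self-adjoint (using $T_1=T_1^*$ and $T_3=T_3^*$), and setting $A:=N^{-1/2}T_2$ and $B:=\sqrt{\tilde\kappa}\,M^{-1/2}D$, both bounded on $L^2(\R)$, one obtains the symmetric factorization $\widetilde{W}_{\kappa,\tilde\kappa}=B^*B-A^*A$.

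It then suffices to take $V_1\colon(\Psi_1,\Psi_2)\mapsto -A^*\Psi_1+B^*\Psi_2$ and $V_2\colon\Psi\mapsto(A\Psi,B\Psi)$, which are bounded and satisfy $V_1V_2=B^*B-A^*A=\widetilde{W}_{\kappa,\tilde\kappa}$. Applying the push-through identity with $X=\tfrac1\kappa V_1$ and $Y=V_2(\mathbbm{1}-T_0)^{-1}$ shows that $\mathbbm{1}+\tfrac1\kappa V_1V_2(\mathbbm{1}-T_0)^{-1}$ is invertible on $L^2(\R)$ if and only if $\mathbbm{1}_2+\tfrac1\kappa V_2(\mathbbm{1}-T_0)^{-1}V_1$ is invertible on $[L^2(\R)]^2$; together with the factorization of $\tilde S_W(\varepsilon)$ from the first paragraph this is exactly the claim (and, if desired, it also produces an explicit formula for $\tilde S_W(\varepsilon)^{-1}$ in terms of the inverse of the Birman--Schwinger operator).

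I do not expect a genuine obstacle in this proposition. Unlike in Proposition \ref{thm:first-BirSchw}, every inverse involved --- $N^{-1}$, $M^{-1}$ and $(\mathbbm{1}-T_0)^{-1}$ --- exists honestly on the whole admissible range $\kappa\ge1/\sqrt2$, $\varepsilon>1$, so no Neumann resummation or meromorphic continuation is needed and the equivalence is literally algebraic. The only delicate point is the bookkeeping that legitimizes the factorization $\widetilde{W}_{\kappa,\tilde\kappa}=B^*B-A^*A$, namely the positivity of $N$ and $M$ and the self-adjointness of $D$, which guarantee that the operator square roots are well defined; the rest is routine. The substantive work --- estimating the norm of the Birman--Schwinger operator \eqref{eq:birman-schwinger} and showing it stays strictly below $1$ once $\kappa$ is large enough, thereby excluding eigenvalues --- belongs to the analysis that follows this proposition.
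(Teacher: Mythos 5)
Your proposal is correct and follows essentially the paper's own route: writing $\widetilde W_{\kappa,\tilde\kappa}=B^*B-A^*A$ with $A=(\mathbbm{1}-\kappa^{-1}T_3)^{-1/2}T_2$ and $B=\sqrt{\tilde\kappa}\,M^{-1/2}D$ reproduces, up to where the minus sign and the factor $\sqrt{\tilde\kappa}$ are placed, exactly the operators $V_1,V_2$ of \eqref{hc-12}--\eqref{hc-13}, and the positivity of $N$ and $M$ that legitimizes the square roots is the same input as in Lemmas \ref{lem:invertible_R} and \ref{lem:invertible_Rtilde}. The only deviation is the closing step, where instead of pointing back to the Neumann-series/meromorphic-continuation argument of Proposition \ref{thm:first-BirSchw} you invoke the purely algebraic equivalence of invertibility of $\mathbbm{1}+XY$ and $\mathbbm{1}+YX$, which is perfectly legitimate here because $(\mathbbm{1}-T_0)^{-1}$ exists for every $\varepsilon>1$; just note that its norm is not uniform as $\varepsilon\downarrow 1$ (cf.\ the $1/\lambda$ term in \eqref{eq:T0kernel}), and that after the rescaling the multiplier is $(s^2+\varepsilon)^{-1/2}$ rather than $(s^2+2\varepsilon)^{-1/2}$ depending on the convention --- in either case strictly below $1$ for $\varepsilon>1$, so the proposition is unaffected.
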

\begin{proof}
The proof is almost identical to the proof of Theorem \ref{thm:first-BirSchw}, so we will only describe the construction of $V_1:[L^2(\R)]^2 \to L^2(\R)$ and $V_2:L^2(\R) \to [L^2(\R)]^2$. We need $V_1$ and $V_2$ to have the property that
\begin{equation*}
	V_1V_2 = \tilde{W}_{\kappa,\tilde\kappa}.
\end{equation*}
Let $\Psi \in L^2(\R)$ and let $D$ be as in \eqref{hc-11}. Define the operator $V_2:L^2(\R) \to [L^2(\R)]^2$ by
\begin{equation}\label{hc-12}
	V_2\Psi = \left[\begin{array}{c}
		-\left(\mathbbm{1}-\frac{1}{\kappa}T_{3}\right)^{-\frac{1}{2}}T_{2}\Psi \\
	\left(\mathbbm{1}+\frac{\tilde\kappa}{\kappa}T_{0}+\frac{\tilde\kappa}{\kappa^2}T_{2}^*\left(\mathbbm{1}-\frac{1}{\kappa}T_{3}\right)^{-1}T_{2}\right)^{-1/2}D \Psi
	\end{array}\right].
\end{equation}
Similarly, let $\Phi = [\Phi_1, \Phi_2]^T \in [L^2(\R)]^2$. We define the operator $V_1:[L^2(\R)]^2 \to L^2(\R)$ by
\begin{align}\label{hc-13}
	V_1\Phi &= \left[\begin{array}{c c} T_{2}^*\left(\mathbbm{1}-\frac{1}{\kappa}T_{3}\right)^{-\frac{1}{2}}, & \tilde\kappa D\left(\mathbbm{1}+\frac{\tilde\kappa}{\kappa}T_{0}+\frac{\tilde\kappa}{\kappa^2}T_{2}^*\left(\mathbbm{1}-\frac{1}{\kappa}T_{3}\right)^{-1}T_{2}\right)^{-1/2}\end{array}\right]\left[\begin{array}{c} \Phi_1 \\ \Phi_2 \end{array}\right] \nonumber \\
	& = T_{2}^*\left(\mathbbm{1}-\frac{1}{\kappa}T_{3}\right)^{-\frac{1}{2}}\Phi_1 + \tilde\kappa D\left(\mathbbm{1}+\frac{\tilde\kappa}{\kappa}T_{0}+\frac{\tilde\kappa}{\kappa^2}T_{2}^*\left(\mathbbm{1}-\frac{1}{\kappa}T_{3}\right)^{-1}T_{2}\right)^{-1/2}\Phi_2.
\end{align}
For $\Psi \in L^2(\R)$ we find that $V_1V_2\Psi$ is given by
\begin{equation*}
	V_1V_2\Psi = \tilde{W}_{\kappa,\tilde\kappa}\Psi
\end{equation*}
and we have our factorization.
\end{proof}

The strategy to show an absence of discrete eigenvalues is to find a necessary condition which any eigenvalue must satisfy, and then show that for every fixed $\tilde\kappa >1$ and for any $\kappa$ larger than some value $\kappa_M$ (depending on $\tilde\kappa$) the above necessary condition cannot be satisfied.  

The first important remark is that both $V_1$ and $V_2$ have finite limits when $\kappa\to\infty$, uniformly in $\epsilon>1$. Thus the operator in \eqref{eq:birman-schwinger} is always invertible if $\epsilon$ is larger than some value $\varepsilon_\kappa>1$. Moreover, this $\varepsilon_\kappa$ converges to $1$ when $\kappa$ goes to infinity. Therefore we know a priori that the points where \eqref{eq:birman-schwinger} might not be invertible on $[L^2(\R)]^2$ must obey  $\varepsilon \in(1,2)$ if $\kappa$ is larger than some value $\kappa_1$. Let us expand the integral kernel of $(\mathbbm{1}-T_0)^{-1}$ around the threshold $\varepsilon = 1$ and introduce the variable $\lambda$ (see below) to find the following
\begin{equation}\label{eq:T0kernel}
  (\mathbbm{1} - T_{0})^{-1}(x,y) = \frac{1}{\lambda} - |x-y| + \delta(x-y) + \frac{1}{2\pi}\int_\R \frac{e^{is(x-y)}}{\sqrt{s^2+1}+1}\di{s} + \mathcal{O}(\lambda),\quad \lambda:= \sqrt{\varepsilon - 1}.
\end{equation}
Using this expansion of the integral kernel, The Birman-Schwinger operator \eqref{eq:birman-schwinger} can be written as
\begin{equation}\label{eq:birman-schwinger-operator}
	\mathbbm{1}_2 + \frac{1}{\kappa}\frac{| \Psi\rangle  \langle \Phi |}{\lambda} + \frac{1}{\kappa}B,
\end{equation}
where the operator $B$ is given by the product of $V_2$, the non-singular terms of \eqref{eq:T0kernel} and $V_1$. Using the same approach as in Sec. \ref{sec:small_kappa}, we can show that $B$ is uniformly bounded for $\lambda > 0$ and $\kappa \geq 1/\sqrt{2}$. Furthermore, $|\Psi \rangle$ and $\langle \Phi |$ in \eqref{eq:birman-schwinger-operator} is given by
\begin{equation}
  |\Psi \rangle := \int_\R V_2(x,x')\di{x'}, \quad \langle \Phi | := \int_\R V_2(y',y)\di{y'},
\end{equation}
and $\Psi$ and $\Phi$ can be shown to be in $L^2(\R^2)$ using Lemma \ref{lem:bounded_b1}. 
Let us rewrite the Birman-Schwinger operator in \eqref{eq:birman-schwinger-operator}:
\begin{equation}
  \mathbbm{1}_2 + \frac{1}{\kappa}\frac{| \Psi\rangle  \langle \Phi |}{\lambda} + \frac{1}{\kappa}B = \left(\mathbbm{1}_2 + \frac{1}{\kappa}\frac{| \Psi\rangle  \langle \Phi |}{\lambda}\left[\mathbbm{1}_2 + \frac{1}{\kappa}B\right]^{-1}\right)\left(\mathbbm{1}_2 + \frac{1}{\kappa}B\right).
\end{equation}
But since $B$ is uniformly bounded in both $\lambda>0$ and $\kappa>1/\sqrt{2}$, there exists some $\kappa_2\geq \kappa_1>1/\sqrt{2}$ such that if $\kappa>\kappa_2$ we have that $\left(\mathbbm{1}_2 + \kappa^{-1}B\right)^{-1}$ exists on $[L^2(\R)]^2$ for all $\lambda > 0$. Consequently, for $\kappa >\kappa_2$ the inverse of the Birman-Schwinger operators exists at $\lambda\in (0,1)$ if and only if
\begin{equation}\label{eq:inverse_op}
  \left(\mathbbm{1}_2 + \frac{1}{\kappa}\frac{| \Psi\rangle  \langle \Phi |}{\lambda}\left[\mathbbm{1}_2 + \frac{1}{\kappa}B\right]^{-1}\right)^{-1},\quad \kappa>\kappa_2
\end{equation}
exists. Using Feshbach's formula with a rank-$1$ projection constructed from  $|\Psi\rangle$ we get  the only values of $0<\lambda<1$ where \eqref{eq:inverse_op} might not exist are those   which solve
\begin{equation}\label{eq:eig_sol}
  \lambda + \frac{1}{\kappa}\left\langle \Phi, \left[\mathbbm{1}_2 + \frac{1}{\kappa}B\right]^{-1}\Psi \right\rangle = 0,\quad \kappa>\kappa_2.
\end{equation}
Thus if $\kappa>\kappa_2$, any discrete eigenvalue of $H_{\kappa,\tilde\kappa}$ has to have a corresponding $\lambda \in (0,1)$ which is a solution to \eqref{eq:eig_sol}.

Let us define the function:
$$f(\lambda,\kappa):=\left\langle \Phi, \left[\mathbbm{1}_2 + \frac{1}{\kappa}B\right]^{-1}\Psi \right\rangle ,\quad \lambda\in [0,1],\quad \kappa\geq \kappa_2.$$
We are interested in finding possible values of $\lambda\in (0,1)$ where the graphs of $f(\lambda,\kappa)$ and $-\kappa \lambda$ cross each other. The function $f$ is jointly uniformly continuous. Moreover, by explicit computation we obtain:
\begin{align}\label{hc-20}
  \lim_{\kappa\to\infty}f(0,\kappa)= 2\pi\left(\tilde\kappa\int_\R \hat{T}_1(0,s)^2\di{s}-\int_\R \hat{T}_2^*(0,s)\hat{T}_2(s,0)\di{s} \right)\Big \vert_{\lambda=0} = \tilde\kappa -1. 
\end{align}
Thus there exists $\kappa_3>\kappa_2$ such that 
\begin{align*}
  f(0,\kappa)\geq  (\tilde\kappa -1)/2,\quad \kappa>\kappa_3.  
\end{align*}
From the uniform continuity of $f$ we obtain the existence of some $\delta\in (0,1)$ such that
\begin{align}\label{hc-21}
  f(\lambda,\kappa)\geq  (\tilde\kappa -1)/4>0,\quad \lambda\in (0,\delta),\quad \kappa>\kappa_3.  
\end{align}
  Moreover, $|f(\lambda,\kappa)|$ is bounded by some constant $K$ for all $\lambda$ and $\kappa$. This implies that if $\kappa>\kappa_3$, the value of $f(\cdot,\kappa)$ is positive on $(0,\delta)$ and is larger than $-K$ on $[\delta,1)$. At the same time, $-\lambda \kappa$ is negative on $(0,\delta)$ and less than $-\kappa \delta$ on $[\delta,1)$. Define $\kappa_M=\max\{\kappa_3, K/\delta\}$. Then the two graphs cannot intersect each other if $\kappa>\kappa_M$ and this completes the proof of absence of eigenvalues.

Now let us consider the case $0 < \tilde\kappa < 1$. All our previous considerations remain true up to and including the identity \eqref{hc-20} where now $\tilde\kappa -1<0$, hence 
\begin{align}\label{hc-22}
  f(0,\kappa)\leq (\tilde\kappa -1)/2<0,\quad \kappa>\kappa_3.  
\end{align}
Also, as before, $f(\lambda,\kappa)\geq -K$ for all $\lambda$ and $\kappa$. 

Consider the function $g(\lambda,\kappa)=\lambda\kappa +f(\lambda,\kappa)$ with $\lambda\in [0,1]$. We have $g(0,\kappa)=f(0,\kappa)<0$ while $g(1,\kappa)=\kappa +f(1,\kappa)\geq \kappa -K>0$ provided $\kappa>K$. Thus $g(\cdot,\kappa)$ must have a zero in $(0,1)$, and this proves the existence of discrete spectrum for all $\kappa>K$.

\section{Proof of Corollary \ref{col:kappa_c}} 
\label{sec:proof_of_corollary_col:kappa_c}
We can now prove the final result, i.e. the existence of a critical charge $\kappa_c$ which has the property that for every $0<\kappa<\kappa_c$ the operator $H_\kappa$ has at least one discrete eigenvalue, while if $\kappa\geq \kappa_c$ the discrete spectrum is empty. 

The proof has three steps. First, we show that  there exists some $\kappa_1\geq 1/\sqrt{2}$ such that $H_{\kappa_1}$ has no discrete spectrum.   Second, we show that given such a $\kappa_1$, the operator $H_\kappa$ has empty discrete spectrum for all  $\kappa\geq \kappa_1$. Third, we show that $\kappa_c$ is the smallest of all such $\kappa_1$.  

\vspace{0.2cm}

\noindent {\bf Step 1}. Let $\kappa>1/\sqrt{2}$ and consider the operator $H_{\kappa,2}$, i.e. with $\tilde\kappa=2>1$. Theorem \ref{thm:critical_kappa} implies the existence of a $\kappa_M > 1/\sqrt{2}$ such that $H_{\kappa,2}$ has no discrete eigenvalues if  $\kappa > \kappa_M$. 

We know that the operators $H_\kappa$ and $H_{\kappa,2}$ have the same essential spectrum. Additionally, we have that
\begin{equation}\label{hc-26}
  H_\kappa \geq H_{\kappa,2}\quad \mbox{if } \kappa \geq 2,
\end{equation}
where the inequality should be understood in the sense of quadratic forms. If $\kappa_1=\kappa_M+1$, the operator $H_{\kappa_1,2}$ has no discrete spectrum, hence \eqref{hc-26} and the min-max principle imply that the discrete spectrum of $H_{\kappa_1}$ is empty.

\vspace{0.2cm}

\noindent {\bf Step 2}. We will now show that the discrete spectrum of $H_\kappa$ with $\kappa\geq \kappa_1$ is also empty.  Define the unitary operator $U_\kappa:L^2(\R^2) \to L^2(\R^2)$ by $(U_\kappa \Psi)(x,y) = \kappa \Psi(\kappa x,\kappa y)$. Then by direct calculation
\begin{equation*}
	U^{-1}_\kappa H_\kappa U_\kappa =\kappa^2 \widetilde{H}_\kappa,\quad \tilde{H}_\kappa :=  -\frac{1}{2}\Delta -\delta(y) + \delta(x) - \frac{1}{\kappa}\delta(x-y).
\end{equation*}

Using the HVZ theorem we can prove that for $\kappa \geq 1/\sqrt{2}$ the essential spectrum of $\tilde H_\kappa$ is $[-1/2,\infty)$. Additionally, due to the sign of the $\kappa$-dependent term we have
\begin{equation*}
	\widetilde{H}_{\kappa} \geq \widetilde{H}_{\kappa_1}, \quad \mbox{if }\kappa \geq \kappa_1.
\end{equation*}
The operator $\widetilde{H}_{\kappa_1}=\kappa_1^{-2} U_{\kappa_1}^{-1}H_{\kappa_1}U_{\kappa_1}$ has no discrete spectrum. Since the bottom of the essential spectrum of $\widetilde{H}_\kappa$ is constant in $\kappa$ and equals $-1/2$, the min-max principle implies that $\widetilde{H}_{\kappa}$ has no discrete spectrum and the same holds true for $H_\kappa$.

\vspace{0.2cm}

\noindent {\bf Step 3}. The set $S$ consisting of all the $\kappa_1$'s considered in the previous two steps is bounded from below by $1/\sqrt{2}$ due to Theorem \ref{thm:main_1}. Let $\kappa'\geq 1/\sqrt{2}$ be the infimum of $S$.  Assume that $\kappa'$ does not belong to $S$. Then there would exist a ground state with energy $E(\kappa')<-\kappa'^2/2$. Using the analytic perturbation theory we could extend this ground state energy to a small interval centered at  $\kappa'$, thus $\kappa'$ would not belong to the closure of $S$, contradiction.

\vspace{0.2cm}

Thus $S=[\kappa',\infty)$ and $\kappa_c=\kappa'$. In fact, this proof provides us with an alternative characterisation of $\kappa_c$, i.e. $\kappa_c$ is the right endpoint of the open interval of $\kappa$'s for which a ground state exists.  


\section{Conclusions} 
\label{sec:conclusion}
In this paper we considered the discrete spectrum of the Schr\"odinger operator for a one-dimensional three-body system with Dirac delta potentials, which models an impurity interacting with an exciton. We have proven that for $\kappa$ close to zero there exists a single non-degenerate bound state which behaves like $\kappa^4$, and we have explicitly calculated the coefficient of the leading term. The ground state survives when $\kappa \in (0,1/\sqrt{2})$, but for some charge $\kappa_c > 1/\sqrt{2}$ the ground state energy hits the essential spectrum, and no bound states of the system exists for $\kappa \geq \kappa_c$. We cannot give an explicit value for $\kappa_c$, but numerical calculations indicate that $\kappa_c \approx 1.546$.

A future project is to study a related system of an impurity and two oppositely charged particles with multiplicative potentials in both one and two dimensions. While the results are expected to be somehow similar, the technical tools one needs to use are quite different. 

\section*{Acknowledgements}
J.H. and T.G.P. are supported by the QUSCOPE Center, which is funded by the Villum Foundation.
H.C. was partially supported by the Danish Council of Independent Research | Natural Sciences, Grant DFF-4181-00042. H.K. was partially supported by the MIUR-PRIN2010-11 grant for the project ``Calcolo delle variazioni'' .
\bibliographystyle{unsrt}
\bibliography{references}{}
\end{document}